\newcommand{\ot}{\otimes}
\newcommand{\End}{\mbox{End}}
\newcommand{\unit}{\boldsymbol{1}}
\newtheorem{definition}{Definition}[section]
\newtheorem{theorem}{Theorem}[section]
\newtheorem{Remark}{Remark}
\newtheorem{proposition}{Proposition}[section]
\def\be{\begin{equation}}
\def\ee{\end{equation}}
\def\IC{\mathbb{C}}
\def\IH{\mathbb{H}}
\def\IR{{\mathbb{R}}}
\def\IZ{{\mathbb{Z}}}
\def\CA{{\cal A}}
\def\CB{{\cal B}}
\def\CD {{\cal D}}
\def\CW {{\cal W}}
\def\CX {{\cal X}}
\def\CZ {{\cal Z}}
\def\CB {{\cal B}}
\def\CQ {{\cal Q}}
\def\CX {{\cal X}}
\def\CY {{\cal Y}}
\def\CZ{{\cal Z}}
\def\la{\langle}
\def\ra{\rangle}
\def\one{{\hbox{ 1\kern-.8mm l}}}
\def\be{\bar{e}}
\def\cl{C\ell}
\def\cl{C\ell} 
\def\cl{C\ell}
\def\be{ \begin{equation} }
\def\ee{ \end{equation}}
\def\Aut{{\rm Aut}}
\title{Dyson's Classification And Real Division Superalgebras}
\author{Roman Geiko and Gregory W.~Moore}
                                           \affiliation{Department of Physics and Astronomy, Rutgers University ,\\
                                           126 Frelinghuysen Rd., Piscataway NJ 08855, USA}
\emailAdd{roman.geiko@physics.rutgers.edu} \emailAdd{gmoore@physics.rutgers.edu}
\abstract{It is well-known that unitary irreducible representations of groups can be usefully classified in a 3-fold classification scheme: Real, Complex, Quaternionic. In 1962 Freeman Dyson pointed out that there is an analogous 10-fold classification of irreducible representations of groups involving both unitary and antiunitary operators. More recently, it was
realized that there is also a 10-fold classification scheme involving superdivision algebras. Here we give a careful proof of the equivalence of these two 10-fold ways.
\newline
\keywords{antiunitary symmetries, time-reversal} 
\newline
\arxivnumber{2010.01675}}
\begin{document}
\maketitle

\section{Introduction And Conclusion}

A fundamental theorem of E. Wigner \cite{Wigner1931, Bargmann1964, Freed:2011, Weinberg:1995} relates the theory of symmetry
in quantum mechanical systems to group representation theory.
Complex linear representations can be characterized as
real, complex, and quaternionic (a.k.a. pseudoreal).
These distinctions can play an important role in physical
applications of representation theory. One example is Kramers'
theorem: From a mathematical viewpoint, it is simply the
observation that the complex dimension of a quaternionic representation is even. A second 
example arises when studying index theory and the closely related subject of anomalies in quantum field theories with fermions. 
In this case, the nature of the
fermionic representation of the gauge group plays an important role.  
Wigner's theorem actually allows for symmetry transformations which are both unitary and
antiunitary. In 1962, when studying symmetry properties
of Hamiltonians in nuclear physics F. Dyson pointed out that
the three-fold classification of complex linear representations
should be generalized to a ten-fold classification of
representations that contain both unitary and antiunitary
symmetry transformations \cite{Dyson_1962}.

In more recent times, several other ten-fold classifications of aspects of symmetry in
quantum mechanics have appeared. One of these, described in \cite{Fidkowski:2009dba,Freed:2013}, is based on the 10-fold
classification of real superalgebras established by C.T.C. Wall \cite{Wall1964}.
It was suggested in \cite{Freed:2013} that the Dyson classification
and the Wall classification are ``essentially the same,'' although
the precise relation between the two schemes was imperfectly understood.
It is the purpose of this paper to fill in this gap and clarify
the relation. We will explain, rather thoroughly and carefully,
that they are, indeed, essentially the same. We must
stop and savor the delicious irony that the English physicist Dyson and
the English topologist Wall were establishing their results at the same
time, but
  the connection between their work only became apparent half a century
later.

We should mention in passing some of the other 10-fold
classification schemes which have appeared in the literature on symmetry
and quantum mechanics.  Foremost among these is the renowned ``Altland-Zirnbauer'' scheme for classification of free Fermion Hamiltonians \cite{Atland1997,Freed:2013,Heinzner:2004xj,
Zirnbauer2013, Zirnbauer:2020nab}.
Another classification scheme called a ``relativistic ten-fold way'' was
put forward by Freed and Hopkins in \cite{Freed:2016rqq}. In quantum
systems that come with a graded Hilbert space, the symmetry groups actually have a
natural $\IZ_2 \times \IZ_2$ grading as stressed in \cite{Freed:2013,Moore:2014}. 
One $\IZ_2$ is provided by the unitary/antiunitary
dichotomy and the other by the grading. In this case there is a similar
10-fold classification scheme based on superalgebras similar to that discussed here. The AZ classification scheme is a special case of this more general statement. The generalization of the AZ classification scheme in principle applies to bosons as well as fermions, interacting or
not, as emphasized in \cite{Moore:2014}. The ensembles in which the Hamiltonian is an odd element will have the special property that the energy spectrum has an inversion symmetry around zero.

Section two of this paper reviews the representation theory of groups
involving
both unitary and antiunitary operators. Two categories of representations
are studied and we discuss how the Schur lemma leads to 3-fold and
10-fold classifications of representations in the two cases.
In section three we review Dyson's 10-fold classification. In
section four we formulate and prove our main theorem, Theorem \ref{conj}, that
relates Dyson's classification to superalgebras. The main
result is summarized by the one-one correspondence given by the
first two columns of Table \ref{table}. Section five comments on an analog of the
Frobenius-Schur indicator for the 10-fold way. It is closely related to
recent work of Ichikawa and Tachikawa \cite{ichikawa2020super}. Section six demonstrates
how, starting from complex representations of a group $G$, one can
construct all 10 Dyson Types. Appendix \ref{B} summarizes basic facts about
superalgebras. Appendix \ref{Clifford algebras} summarizes our notation for Clifford algebras.
Finally, for convenience, we give a reprise of the statement and
proof of Wall's classification, following a treatment by P. Deligne.

We would like to call attention to two other closely related papers.
In \cite{neeb2017} $\phi$-representations were studied in relation  to modular theory of operator algebras. While completing our draft we learned of a very closely related paper by D. Rumynin and J. Taylor \cite{rumynin2020real}, which contains a modern review of Dyson's classification.

\section*{Acknowledgements}
We thank Dan Freed, Alexei Kitaev, Yuji Tachikawa, and Martin Zirnbauer for helpful conversations and correspondence. R.G. thanks Andrei Grekov for helpful discussions. G.M. had the great good fortune to have known Freeman Dyson. It was a pleasure discussing with him
the problem which is solved in this paper.
The authors are supported by the DOE under grant DOE-SC0010008 to Rutgers.

\section{Review Of \texorpdfstring{$\phi$}{TEXT}-representation Theory}
\label{Sec2}
We wish to define a complex semilinear action of a compact group $G$ on a complex vector space. By semi-linearity we mean that the group elements act either $\IC$-linearly or $\IC$-antilinearly. In order to   do that
conveniently, we ``decomplexify'' our complex vector spaces by presenting them as real vector spaces equipped with a complex structure $V:=(V_{\IR},I)$.
Below, we give basic definitions of $\phi$-representations (also known as antiunitary representations, semilinear representations, and co-representations), their subrepresentations etc.
\begin{definition}
\label{phi-rep-Def}
Let $G$ be a compact $\mathbb{Z}_2$-graded group, where the grading is given by a group homomorphism
\begin{equation}
\label{gradingG}
    \phi: G \to \mathbb{Z}_2\;,
\end{equation}
which we assume to be surjective. Here $\IZ_2$ denotes the multiplicative group of square roots of unity.

1) A $\phi$-representation $(\rho, V_{\IR},I)$ of a $\mathbb{Z}_2$-graded group $(G,\phi)$ is a real vector space with a complex structure $(V_{\IR},I)$ and a homomorphism
\begin{equation}
\rho: G\to \End_{\;\IR}(V_{\IR})\;
\end{equation}
such that
\begin{equation}
    \rho(g)I=\phi(g)I\rho(g)\;,\quad
        \mbox{for}\quad  g\in G\;.
\end{equation}

2) A $\phi$-subrepresentation $(\rho', W_{\IR}, I_{W})$ of the $\phi$-representation $(\rho, V_{\IR}, I_V)$ is a $\phi$-representation such that $W_{\IR}$ is a real vector subspace of $V_{\IR}$ with $I_{V}|_{W}=I_{W}
$ and $\rho|_W=\rho'$.

3) A $\phi$-representation is called irreducible if it has no non-trivial $\phi$-subrepresentations.
\end{definition}

The set of $\phi$-reps of the graded group $G$ can be seen as objects of some category. In the next two sections, we define two different categories consisting of $\phi$-reps with different types of morphisms among them.

\subsection{The Category \textbf{Rep}\texorpdfstring{$_{\IC}(G,\phi)$}{TEXT}}
Here, we define the category \textbf{Rep}$_{\IC}(G,\phi)$ of $\phi$-representations of the compact graded group $G$ with $\IC$-linear $G$-equivariant morphisms.

\begin{definition}
\label{defhomC}
 A complex-linear morphism (intertwiner) between two $\phi$-representations $(\rho',V_{\IR}',I')$ and $(\rho'',V_{\IR}'',I'')$ is a complex-linear map $f\in \mbox{Hom}_{\IC}(V',V'')$ commuting with the $G$-action
\begin{align}
\label{c-lin.morphism1}
    f\rho'(g)&=\rho''(g)\;f\;,\quad \forall g\in G\;,\\
\label{c-lin.morphism2}
    fI'&=I''f \;.
\end{align}
\end{definition}
We denote the set of such morphisms as $\mbox{Hom}_{\IC}^{G}(V',V'')$.

Since the group $G$ under consideration is compact, all representations are fully reducible. Indeed, the compactness of $G$ allows us to make all the $\rho(g)$ unitary or antiunitary. In other words, there always exists a $\phi$-equivariant inner product, which allows us to define orthogonal complements to irreducible subrepresentations. Thus, we can always decompose any $\phi$-representation into a direct sum of irreducible ones. In other words, we can say that \textbf{Rep}$_{\IC}(G,\phi)$ is a semisimple category.

The morphisms between any pair of simple objects are described by Schur's lemma.

\begin{theorem}{(Schur's lemma for $\phi$- representations)}
\label{phiSchur}

1. Let $f$
 be a complex-linear intertwiner between two $\phi$-irreps. Then, $f$ is either the zero map or an isomorphism.

2. The set of complex-linear self-intertwiners of a $\phi$-irrep, $\End\;^{G}_{\IC}(V)$, has the structure of a real division algebra.
\end{theorem}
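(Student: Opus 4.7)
The plan is to carry out a direct adaptation of the classical Schur lemma, with the single new subtlety being the interplay of the complex structure $I$ with the $\IZ_2$-grading $\phi$.

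For part 1, take $f \in \Hom_{\IC}^G(V', V'')$ and verify that $\ker f \subset V'_{\IR}$ and $\im f \subset V''_{\IR}$ are $\phi$-subrepresentations in the sense of Definition \ref{phi-rep-Def}. Closure of $\ker f$ under $I'$ is immediate from (\ref{c-lin.morphism2}): if $f(v)=0$ then $f(I'v)=I''f(v)=0$. Closure under $\rho'(g)$ follows from (\ref{c-lin.morphism1}) in the same way: $f(\rho'(g)v)=\rho''(g)f(v)=0$. The dual statements for $\im f$ are analogous. Irreducibility of $V'$ and $V''$ then forces $\ker f \in \{0,V'\}$ and $\im f \in \{0,V''\}$, and the only consistent combinations are $f=0$ or $f$ bijective, in which case $f$ is an isomorphism of $\phi$-representations.

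For part 2, set $A := \End_{\IC}^G(V)$ with composition as product and $\mathrm{id}_V$ as unit. It is manifestly a unital associative $\IR$-algebra. By part 1 applied with $V'=V''=V$, every nonzero element $f \in A$ is a bijection; conjugating $fI=If$ and $f\rho(g)=\rho(g)f$ by the set-theoretic inverse shows $f^{-1}\in A$, so $A$ is a real division algebra.

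The one conceptual point I would emphasize, which is a priori somewhat surprising and is really the only substantive thing to say beyond the ungraded Schur lemma, is \emph{why} $A$ is real rather than complex. Although $I$ is $\IC$-linear as a map $V\to V$, it does not belong to $A$ once $\phi$ is nontrivial: for odd $g$ one has $I\rho(g)=-\rho(g)I$, violating (\ref{c-lin.morphism1}). Consequently there is no canonical $\IC$-scalar action on $A$, only an $\IR$-scalar action, and this is exactly what later allows Frobenius' theorem to produce three possibilities $\IR,\IC,\IH$, rather than only $\IC$ as in the ordinary unitary case. I do not anticipate any real obstacle; the argument is essentially bookkeeping, provided this scalar-action subtlety is articulated clearly.
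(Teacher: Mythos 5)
Your proof is correct and follows essentially the same route as the paper: part 1 by checking that $\ker f$ and $\im f$ are $\phi$-subrepresentations (closed under both $I$ and the $G$-action) and invoking irreducibility, and part 2 by observing that composition and inversion keep you inside $\End^G_\IC(V)$, so Schur's lemma makes it a division algebra. Your closing remark on why the algebra is only $\IR$-linear is exactly the point the paper makes (phrased there as ``$fI'$ and $I''f$ do not necessarily belong to $\Hom^G_\IC$'' because the odd $\rho(g)$ anticommute with $I$); your $\End$-specific formulation, that $I\notin A$ and hence no canonical $\IC$-scalar action exists, is a cleaner way to articulate the same obstruction.
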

\begin{proof}
1. Let us notice that $\mbox{ker}(f)$ and $\mbox{Im}(f)$ are $\phi$-subrepresentations. Since we are looking for a map between irreducible representations, the map can be either an isomorphism or the zero-map.

2. In general, $\mbox{Hom}^{G}_{\IC}(V',V'')$ is a real vector space. Indeed, assume we found $f$ satisfying \eqref{c-lin.morphism1} and \eqref{c-lin.morphism2}. Our $\phi$-irreps always contain at least one $\IC$-antilinear $\rho(g)$. For such $g$:
\begin{equation}
    f\rho'(g)=\rho''(g)\;f\quad \centernot \implies f I' \rho'(g)=\rho''(g) I''f \;.
\end{equation}
So, $fI'$ and $I''f $ do not necessarily belong to $\mbox{Hom}^{G}_{\IC}(V',V'')$.

It is clear that $\End_{\IC}^{G}(V)$ has the structure of a division algebra. This is a vector space with a multiplication - composition of maps. Due to the first point, every non-zero element of this algebra is invertible.
\end{proof}

A theorem due to Frobenius \ref{FrobTh} lists all real (unital associative) division algebras: they are isomorphic to $\IR$, $\IC$ or $\IH$. Then, a self-intertwiner of an irreducible $\phi$-representation acts as one of the real division algebras on the underlying representation space.
The fact that there are precisely three real associative unital division algebras underlies the number three in Dyson's famous Three-Fold Way.

\subsection{The Category \textbf{Rep}\texorpdfstring{$_{\IR}(G,\phi)$}{TEXT}}
\label{RepR}
Here, we define a category with morphisms being general real-linear maps, and discuss its basic properties. In analogy with morphisms among graded rings, which might shift the grading, we introduce the representation category with morphisms being semilinear maps.

As defined, $\phi$-representations always refer to real vector spaces endowed with complex structures. The most general morphisms between real vector spaces are real linear maps. Using the complex structures on both source and target spaces $V'=(V_{\IR}'
,I')$ and $V''=(V_{\IR}'',I
'')$, we can define an involution on the space of $\IR$-linear maps:
\begin{equation}
    f \mapsto -I''f I'\;, \quad f\in \mbox{Hom}_{\IR}(V'_{\IR},V''_{\IR}).
\end{equation}
There exists a homogeneous basis, such that even morphisms are $\IC$-linear and odd morphisms are $\IC$-antilinear maps correspondingly. This defines a super vector space structure on $\mbox{Hom}_{\IR}(V',V'')$. Since the composition of morphisms respects the $\IZ_2$-grading, it follows that $\End_{\IR}(V_{\IR}):=\mbox{Hom}_{\IR}(V_{\IR},V_{\IR})$ has the structure of a real superalgebra. For any real vector space $V_{\IR}$ with the complex structure $I$, we denote the superalgebra of $\IR$-linear maps graded by the adjoint action of $I$ as $\End_{\IR}(V_{\IR},I)$.

Having this observation at hand, we can define morphisms between $\phi$-reps to be $\IR$-linear super G-equivariant maps. Homogeneous $f^{0}, f^{1}\in \mbox{Hom}_{\IR}^{G}(V',V'')$ satisfy:
\begin{align}
f^{0}I'&=+I''f^{0}\;,\\
f^{0}\rho'(g)&=+\rho''(g)f^{0}\;,\\
    f^{1}I'&=-I''f^{1}\;,\\
    f^{1}\rho'(g)&=\phi(g)\rho''(g)f^1\;.
\end{align}

By $\textbf{Rep}_{\IR}^{G}(G,\phi)$ we denote the category of $\phi$-representations with $\IR$-linear super $G$-equivariant maps as morphisms.

 \begin{theorem}{(super-Schur's lemma for $\phi$- representations)}

1. Let $f$ be a homogeneous $\IR$-linear morphism between $\phi$-irreps. Then, $f$ is  either  the  zero  map  or  an  isomorphism.

2. The set  of  $\IR$-linear  endomorphisms $\End^{G}_{\IR}(V_{\IR},I)$ of a $\phi$-irrep $(V_{\IR},I)$, has  the  structure  of  a  real division superalgebra.
\begin{proof}
The proof of the first point is literally the same as for the Schur's lemma \ref{phiSchur}. Point two follows from what we said above.
\end{proof}
\end{theorem}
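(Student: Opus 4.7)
My plan is to mirror the proof of Theorem \ref{phiSchur} but carefully track how the grading on morphisms interacts with the $\phi$-equivariance conditions. The key point is that by \emph{super-}Schur I only need invertibility of nonzero \emph{homogeneous} morphisms, so I may split cases into $f=f^0$ (even, $\IC$-linear) and $f=f^1$ (odd, $\IC$-antilinear), each of which satisfies its own set of intertwining relations listed just before the theorem. In each case the strategy is the standard one: show that $\ker f$ and $\im f$ are $\phi$-subrepresentations of the source and target, and then invoke irreducibility.

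For point 1, suppose first that $f=f^0$. I would check that $\ker f^0 \subseteq V'_{\IR}$ is stable under $I'$ and all $\rho'(g)$: if $f^0 v=0$ then $f^0(I'v) = I''f^0 v = 0$, and $f^0(\rho'(g)v) = \rho''(g) f^0 v = 0$. The same computation shows $\im f^0$ is stable under $I''$ and $\rho''(g)$. For the odd case $f=f^1$, the sign in $f^1 I' = -I'' f^1$ is harmless for $\ker f^1$ and $\im f^1$ because zero is its own negative and the image is a subspace; similarly the factor $\phi(g)=\pm 1$ in $f^1 \rho'(g) = \phi(g)\rho''(g) f^1$ does not affect stability of the kernel or the image. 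Hence in both parities $\ker f$ is a $\phi$-subrepresentation of $V'$ and $\im f$ is a $\phi$-subrepresentation of $V''$, and irreducibility forces $f$ to be either $0$ or a bijection. The inverse of an even (resp.\ odd) bijective intertwiner is again even (resp.\ odd) and equivariant, by the same manipulations applied to $f^{-1}$, so homogeneous isomorphisms have homogeneous inverses.

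For point 2, I would combine point 1 with the super-vector-space structure already established on $\End_{\IR}(V_{\IR},I)$ via the involution $f \mapsto -IfI$. The restriction to $G$-equivariant maps is stable under this involution because the relations $f^0\rho(g)=\rho(g)f^0$ and $f^1\rho(g)=\phi(g)\rho(g)f^1$ are preserved by composition with $I$ on either side up to the appropriate sign. Thus $\End^G_{\IR}(V_{\IR},I)$ inherits the structure of a real associative superalgebra. By point 1, every nonzero homogeneous element is invertible with inverse in the same superalgebra, which is exactly the definition of a real division superalgebra.

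I do not expect any serious obstacle here: the only subtlety is bookkeeping with the two kinds of signs (from $I$ and from $\phi$) for the odd piece, and checking that the image of an odd morphism is indeed closed under $\rho''(g)$, where one uses $\phi(g)^2=1$ to reabsorb the grading factor. The real content of the classification is pushed into the later identification of which real division superalgebras can actually arise, which Wall's theorem then supplies.
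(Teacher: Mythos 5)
Your proposal is correct and simply unpacks, with full detail, the argument the paper compresses into one line ("literally the same as Schur's lemma"): split into homogeneous parities, check kernel and image are $\phi$-subrepresentations using the sign conventions, and conclude via irreducibility that nonzero homogeneous endomorphisms are invertible, making $\End^{G}_{\IR}(V_{\IR},I)$ a real division superalgebra. Same approach as the paper, just made explicit.
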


Due to a theorem by Wall \cite{Wall1964} (see Appendix \ref{Brauer} for a review), there are 10 real division superalgebras, so that any simple object of $\textbf{Rep}_{\IR}(G,\phi)$ has one of the ten types of endomorphisms. If we endow the category of $\phi$-representations with real-linear morphisms, we get the 10-fold classification of non-trivial morphisms among irreducible $\phi$-reps. Let us denote an endomorphism superalgebra of the given $\phi$-irrep as $\CW$. The set of real division superalgebras can be identified with the graded Brauer monoid of reals $sBR(\IR)$, considered as a set. The latter is as a set,
\begin{equation}
sBR(\IR)=\{\cl_{-3},\;\cl_{-2},\;\cl_{-1},\;\cl_{1},\;\cl_{2}, \;\cl_{3},\; \;\IR\;, \IH,\; \IC,\; \IC\ell_{1}\}\;.
\end{equation}
See Appendix \ref{Clifford algebras} for the notation. Note we consider superalgebras isomorphic if they are isomorphic in the graded sense: an isomorphism must be grading-preserving. Thus, we have a map from $\phi$-irreps to $sBR(\IR)$
\begin{equation}
    (\rho,V_{\IR},I)\mapsto [\CW]\;.
\end{equation}
We denote this map by $\delta$.

\section{Review of Dyson's Classification of \texorpdfstring{$\phi$}{TEXT}-irreps}
\label{Sec3}
Here, we review some aspects of theory of $\phi$-representations discussed by Dyson in his seminal paper.  In order to obtain a coarse classification of irreducible $\phi$-representations of $(G,\phi)$, we shall compress the initial data $(\rho, V_{\IR}, I)$ into a triple of real associative algebras ($\CD$, $\CB$, $\CA$).

Following Dyson, we define triples of algebras $(\CD,\CB,\CA)$, and list all possible triples potentially arising from an arbitrary $\phi$-irrep. Since $V_{\IR}$ has a complex structure, it has even real dimension; let us denote $2n=\mbox{dim}(V_{\IR})$.

\begin{definition}
\label{Dyson algs}
Let $(\rho, V_{\IR}, I)$ be a $\phi$-irrep of a graded group $(G,\phi)$.

1) We denote the real algebra generated by $\rho (g)$ for $g \in G$ such that $\phi(g)=1 $ as $\CA\subset \End_{\IR}(V_{\IR})$, and by $\mathcal{X}$ we denote its commutant within $\End_{\IR}(V_{\IR})$.

2) The real algebra obtained by joining $I$ to $\CA$ is denoted by $\CB$, and $\CY$ is its commutant within $\End_{\IR}(V_{\IR})$.

3) The real algebra generated by $\rho(g)$ for all $g \in G$ and $I$ is denoted by $\CD$, and $\CZ$ is its commutant within $\End_{\IR}(V_{\IR})$.
\end{definition}

First of all, we notice that irreducibility of $\rho$ is equivalent to simplicity of $\CD$. Then, $\CD$ is a real simple algebra and, due to a  theorem by Wedderburn, is isomorphic to the full matrix algebra with real, complex, or quaternionic coefficients. It is convenient to introduce the following notation: we denote the simple real matrix algebra with coefficients in a real division algebra  $\End_{\IR}(\IR^{n})\ot_{\IR} \mathbb{K}\cong \End_{\IR}(\mathbb{K}^{n})$ by $\mathbb{K}(n)$. Also, we will face algebras made of $m$ copies of $\mathbb{K}(n)$, we denote them by $m \mathbb{K}(n)\cong \unit_m\ot\mathbb{K}(n)$. Explicitly, $m \mathbb{K}(n)$ consists of matrices of the form
\begin{equation}
     \begin{pmatrix}
\mathbf{M}& &  & &  \\
 &  &\ddots &  & & \\
 &  & &  & \mathbf{M}
\end{pmatrix}\;,\quad \mbox{with}\; \mathbf{M}\in \mathbb{K}(n)\;,
\end{equation}
and as an algebra, $m \mathbb{K}(n)\cong \mathbb{K}(n)$. Using this notation, we present three alternatives
\begin{equation}
    \CD \cong \IR(2n)\;,\quad \CD\cong \IC(n)\;,\quad \CD=\IH(m)\;,
\end{equation}
where $n=2m$. Using the double commutant theorem, we can find the commutants of those algebras within $\IR(2n)$. With those three alternatives, the commutants respectively are
\begin{equation}
    \CZ \cong 2n\IR\;,\quad \CZ\cong n\IC\;,\quad \CZ=m\IH\;.
\end{equation}
Clearly, these three algebras coincide with the algebras appearing in Schur's lemma for irreps from $\textbf{Rep}_{\IC}(G,\phi)$.

Now we proceed to the possible types of $\CB$ with having $\CD$ fixed. The real dimension of $\CD$ is twice that of $\CB$, since $G$ has non-trivial $\mathbb{Z}_2$-grading. In other words, multiplication by $\rho(g)$ for any $g$ with $\phi(g)=-1$ defines an isomorphism from $\CB$ to a complement of $\CB$ within $\CD$. While $V$ need not be an irrep of $\CB$, it will be completely reducible.

Let us analyze the general structure of a unitary subrepresentation within the given $\phi$-irrep. Denote the map $\rho$ restricted to $G^{0}=\{g\in G\;| \phi(g)=1\}$ as $\rho^{\IC}$, since we refer to a standard $\IC$-linear representation. So, over the complex numbers, $\rho^{\IC}$ can be reduced to $N$ blocks:
\begin{equation}
\label{rhoCdecomposed}
    \rho^{\IC}(u)=\rho^{\IC}_1(u)\oplus \dots \oplus\rho^{\IC}_{N}(u)\;,\quad \mbox{where} \quad u\in G^{0}\;.
\end{equation}
For any odd element $a$ of the group $G$, we can define a map
\begin{equation}
    V_{a}(u)=a^{-1}\;u\;a\;,
\end{equation}
which is an automorphism of $G^{0}$ and consequently
\begin{equation}
\label{Vaction}
    \rho^{\IC}\mapsto \rho(a)^{-1} \rho^{\IC}\; \rho(a)
\end{equation}
is an automorphism of $\rho^{\IC}$. The map \eqref{Vaction} has to exchange the components of \eqref{rhoCdecomposed} in order for $\rho$ to be irreducible. In other words, \eqref{Vaction} acts as a non-trivial permutation on the components of \eqref{rhoCdecomposed}. In order to proceed further, let us apply \eqref{Vaction} twice
\begin{equation}
\label{Vaction2}
    \rho^{\IC}\mapsto \rho(a^2)^{-1}\;\rho^{\IC}\; \rho(a^2)=\rho^{\IC}(w)^{-1}\;\rho^{\IC}\; \rho^{\IC}(w)\;,\quad \mbox{where} \quad w\in G^{0}\;,
\end{equation}
which gives us an automorphism of the unitary subrepresentation, and leaves the components of \eqref{rhoCdecomposed} invariant. This shows that $\eqref{Vaction}$ defines an involution on $\rho^{\IC}$. This involution either acts trivially on a component of \eqref{rhoCdecomposed}, or transposes a pair of components, so that irreducible unitary subrepresentations form either singlets or doublets under the action of $\eqref{Vaction}$. A crucial observation is that each singlet, as well as each doublet, corresponds to precisely one irreducible $\phi$-representation.

\begin{proposition}
\label{rhoC}
Unitary representation $\rho^{\IC}$ contains either one or two irreducible components; if there are two components, then \eqref{Vaction} exchanges them.
\end{proposition}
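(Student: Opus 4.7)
The plan is to leverage the involution $\sigma$ that $V_{a}$ induces on the irreducible components of $\rho^{\IC}$ — constructed in the discussion preceding the proposition — and to use the $\phi$-irreducibility of $\rho$ to force $\sigma$ to have a single orbit. This immediately bounds the number of components by the maximal orbit size of an involution, which is $2$.

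First I would verify that $\sigma$ depends only on the nontrivial coset of $G/G^{0}$, not on the choice of representative $a$. If $a' = g a$ with $g \in G^{0}$, then $V_{a'}$ differs from $V_{a}$ by the inner automorphism $\mathrm{Ad}_{g}$ of $G^{0}$, and inner automorphisms preserve every isomorphism class of irreducible representations; hence both odd elements induce the same permutation on the components of $\rho^{\IC}$. In particular, $\sigma$ is canonical.

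Next, for any $\sigma$-invariant subset $S \subseteq \{1,\ldots,N\}$ I would form the subspace $W_{S} := \bigoplus_{i\in S} V_{i}$ and check that it is a $\phi$-subrepresentation of $V$: it is $\IC$-invariant and $G^{0}$-invariant as a direct sum of such summands, and the condition $\sigma(S)=S$ together with the $a$-independence of $\sigma$ makes it invariant under every odd $\rho(a)$. The $\phi$-irreducibility of $\rho$ then forces the only $\sigma$-invariant subsets to be $\emptyset$ and $\{1,\ldots,N\}$, i.e.\ $\sigma$ is transitive on components. Since an involution on a finite set has only orbits of size $1$ or $2$, a transitive involution is either a single fixed point or a single transposition, so $N \in \{1,2\}$, and in the two-component case the components are exchanged by $\sigma = V_{a}$, as claimed.

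The main technical nuisance will be multiplicities: when \eqref{rhoCdecomposed} contains isomorphic summands, the individual $V_{i}$ are not canonically defined as subspaces of $V$, and $\sigma$ is really only well-defined on the canonical $G^{0}$-isotypic decomposition. I would first run the transitivity argument above on isotypic components, and then rule out multiplicity greater than one by an anti-linear Schur argument on the centralizer of $\rho^{\IC}$ — concretely, by diagonalizing the $\IC$-linear operator $\rho^{\IC}(a^{2})$ on the multiplicity space of an isotypic component to extract a proper $\phi$-subrepresentation, contradicting the irreducibility hypothesis.
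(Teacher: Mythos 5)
Your plan reproduces the paper's route — use the $\phi$-irreducibility of $\rho$ together with the involution that $V_a$ induces on the decomposition of $\rho^{\IC}$, and conclude that there is only one orbit — and the extra care you take (independence of the odd representative $a$, explicitly checking that a $\sigma$-invariant sum of components is a $\phi$-subrepresentation) is a useful tightening of the prose preceding the proposition. But the patch you propose for the multiplicity problem is incorrect, and it targets the wrong conclusion.

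First, ``rule out multiplicity greater than one'' is too strong. If there is a single isotypic component fixed by $\sigma$, its multiplicity can be $2$: this actually happens for Dyson types $\IH\IR$ and $\IH\IH$, where $\rho^{\IC}$ is literally a double copy of one irreducible (the paper says so explicitly in Section \ref{HowToConstruct}: ``In order to obtain the type $\IH\IR$ from a potentially real irrep, we have to take its double-copy''). So a proof that eliminates multiplicity $\ge 2$ would falsely exclude two of the ten types; the statement you should be after is ``at most two irreducible constituents in total.'' Second, the concrete mechanism fails: on an isotypic block $V_\lambda\otimes M$ the operator $\rho^{\IC}(a^2)$ is $\rho^{\IC}_\lambda(a^2)\otimes \unit_M$ (it is an element of the $G^0$-action, not a $G^0$-intertwiner), so it acts trivially on the multiplicity space $M$ and diagonalizing it there yields no decomposition.

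The multiplicity issue can be sidestepped entirely, and this is what the paper's prose is tersely gesturing at. Pick any single $\rho^{\IC}$-irreducible complex subspace $W\subset V$. Then $\rho(a)W$ is again complex (because $\rho(a)$ anticommutes with $I$) and $G^0$-invariant (because $\rho(a)$ conjugates $\rho^{\IC}(u)$ to $\rho^{\IC}(a^{-1}ua)$), and $\rho(a)\bigl(W+\rho(a)W\bigr)=\rho(a)W+\rho(a^2)W=\rho(a)W+W$ since $a^2\in G^0$ preserves $W$. Hence $W+\rho(a)W$ is a $\phi$-subrepresentation and, by irreducibility of $\rho$, equals $V$. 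Both $W$ and $\rho(a)W$ are $G^0$-irreducible, so either they coincide (one constituent) or they intersect trivially and $V=W\oplus\rho(a)W$ (two constituents, exchanged by $\rho(a)$). This gives the proposition in one stroke, with no appeal to isotypic components, canonical decompositions, or multiplicity counts.
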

We can now show that, given knowledge of $\CD$, one can deduce the algebra $\CB$. Note that $\CB$   is a complex semisimple algebra containing up to two blocks of the same dimension (since \eqref{Vaction} is an invertible map). Also, as follows from the consideration above, the simple components have to be isomorphic algebras. Using what we said above and the fact that the real dimension of $\CD$ is twice that of $\CB$, we can fix the complex algebra $\CB$ uniquely (up to isomorphism):
\begin{equation}
\label{DB}
 \begin{aligned}
\CD& \cong \mathbb{R}(2n) \quad  \Rightarrow \quad  \CB \cong \IC(n)\;,\\
\CD& \cong \IC(n) \;\;\quad  \Rightarrow \quad  \CB \cong \IC(m)\oplus \IC(m)\;,\quad \quad m=n/2\;,\\
\CD& \cong \IH(m) \; \quad  \Rightarrow \quad  \CB \cong 2\IC(m)\;.
 \end{aligned}
 \end{equation}
  
  We now proceed to $\CA$: this is an algebra such that adjoining $I$ to it gives $\CB$. 
  The simple components are simple real algebras. If $I$ belongs to $\CA$, then its order is $2n^2$, otherwise it is $n^2$. Assuming $\CB$ has one simple component:
\begin{equation}
\label{BtoA}
\CB \cong \IC(n) \quad  \Rightarrow \quad
\begin{cases}
 \quad\CA \cong 2\IR(n)\;,\\
 \quad\CA \cong \IC(n)\;,\\
 \quad\CA \cong \IH(m)\;.
\end{cases}
 \end{equation}

 This result can be directly applied to the first and third line in \eqref{DB}. For the second line, the result \eqref{BtoA} should apply to the two components independently. As follows from the derivation of Proposition \ref{rhoC}, there are automorphisms of the full group algebra permuting its simple components. Rephrased, $\CB$ might have up to two simple components being isomorphic as algebras. Matching the dimensions, we get the following possibilities:
 \begin{equation}
\label{CCtoA}
\CB=\IC(m)\oplus\IC(m) \quad  \Rightarrow \quad
\begin{cases}
 \quad\CA=2\IR(m)\oplus2\IR(m)\;,\\
 \quad\CA=\IC(m)\oplus\IC(m)\;,\\
 \quad\CA=2\IC(m)\;,\\
 \quad\CA=\IH(p)\oplus\IH(p)\;,\\
\end{cases}
 \end{equation}
 where $p=n/4$. According to the standard terminology, we say that the complex subrepresentation $\rho^{\IC}$ is potentially real, truly complex, or potentially quaternionic if $\CA$ has real, complex, or quaternionic type respectively.

 This finishes the list of possible triples $(\CD,\CB,\CA)$ arising as Dyson's algebras \ref{Dyson algs} of the $\phi$-irrep, and any triple of algebras from this table may correspond to some $\phi$-irrep. Let us call a   triple of algebras from this table a \textit{Dyson triple}. It is clear that we can introduce an equivalence relation on the Dyson triples: we say that two triples are similar if the constituent algebras differ by a common tensor factor $\IR(p)$ for some $p$. Then, we obtain ten similarity classes of the Dyson triples. Similarity classes of the Dyson triples are called \textit{Dyson Types}. Recall, $\CB$ is not independent, so that $(\CD,\CB,\CA)$ contains the same data as $(\CD,\CA)$. According to that, any Dyson Type can be encoded by the similarity classes of algebras $\CD$ and $\CA$. Following Dyson, we denote the similarity class of $(\CD,\CA)$ as $[\CD][\CA]$, and for brevity $[\IC][\IC\oplus \IC]\equiv \IC\IC_1$, and $[\IC][\IC]\equiv \IC\IC_2$, where square brackets stand for the equivalence class defined above.

 \begin{table}[h!]
\begin{center}
\begin{adjustbox}{width=1\textwidth}
\begin{tabular}{ |c|c|c|c|c|c|c|c| }
\hline
$\mathbb{D} $ & $Type$ & $\CD$ & $\CB$ & $\CA$ & $\CX$ & $\CY$ & $\CZ$ \\
  \hline
 $\cl_{1}$ & $\mathbb{RR}$ & $\mathbb{R}(2n)$ & $\IC(n)$ & $2\mathbb{R}(n)$  & $n\IR(2)$ & $n\IC$ & $2n \IR$\\
 \hline
$\IR$ & $\mathbb{RC}$  & $\IR(2n)$ & $\IC(n)$ & $\IC(n)$ & $n\IC$ & $n\IC$ & $2n\IR$\\
\hline
$\cl_{-1}$ & $\mathbb{RH}$  & $\IR(4m)$ & $\IC(2m)$ & $\IH(m)$ & $m \IH^{\scriptstyle{opp}}$ & $2m \IC$ & $4 m \IR$\\
\hline
$\cl_{2}$ & $\mathbb{CR}$  & $\IC(2m)$ & $\IC(m)\oplus \IC(m)$ & $2\IR(m) \oplus 2\IR(m)$ & $m \IR(2) \oplus m \IR(2)$ & $m \IC \oplus m \IC$ & $2 m \IC$\\
\hline
$\IC$ & $\mathbb{CC}_1$  & $\IC(2m)$ & $\IC(m)\oplus \IC(m)$ & $\IC(m)\oplus \IC(m)$ & $m \IC \oplus m \IC$ & $m \IC \oplus m \IC$ & $2 m \IC$\\
\hline
$\IC\ell_1$ & $\mathbb{CC}_2$  & $\IC(2m)$ & $\IC(m)\oplus \IC(m)$ & $2\IC(m)$  & $m \IC(2)$ & $ m \IC\oplus m \IC$ & $2m \IC$ \\
\hline
$\cl_{-2}$ & $\mathbb{CH}$  & $\IC(4p)$ & $\IC(2p)\oplus \IC(2p)$ & $\IH(p) \oplus \IH(p)$  & $p \IH^{\scriptstyle{opp}} \oplus p \IH^{\scriptstyle{opp}}$ & $ 2p \IC\oplus 2p \IC$ & $4p \IC$ \\
\hline
$\cl_{3}$ & $\mathbb{HR}$  & $\IH(m)$ & $2\IC(m)$ & $4\IR(m)$  & $m\mathbb{R}(4)$ & $m\IC(2)$ & $m \IH^{\scriptstyle{opp}}$ \\
\hline
$\IH$ & $\mathbb{HC}$  & $\IH(m)$ & $2\IC(m)$ & $2\IC(m)$  & $m\IC(2)$ & $m\IC(2)$ & $m \IH^{\scriptstyle{opp}}$ \\
\hline
$\cl_{-3}$ & $\mathbb{HH}$  & $\IH(2p)$ & $2\IC(2p)$ & $2\IH(p)$  & $p\IH^{\scriptstyle{opp}}$ & $2p\IC(2)$ & $2p\IH^{\scriptstyle{opp}}$ \\
\hline
\end{tabular}
\end{adjustbox}
\caption{The real division superalgebras matched with the Dyson Types and their Dyson triples. The first column will be explained below - it is one the main points of this paper.}
\label{table}
\end{center}
\end{table}

Examples of $\phi$-irreps can be found in the original paper \cite{Dyson_1962}. In particular, Dyson shows that all ten possibilities indeed occur. In section \ref{HowToConstruct} below, we discuss how $\phi$-irreps arise from ordinary complex-linear irreps.

We have defined a map from $\phi$-irreps to the Dyson triples.
\begin{equation}
(\rho, V_{\IR},I)\to (\CD,\CB,\CA)\;.
\end{equation}
Further, taking the similarity class of the Dyson triple defines a map to the Dyson Type:
\begin{align}
\alpha:\quad \mbox{Irreps}(G,\phi)&\to Dyson\; Type\;,\\
(\rho, V_{\IR},I)&\mapsto [\CD][\CA]\;.
\end{align}

\section{Real Division Superalgebras And Dyson Types}
\label{Sec4}
We have seen that there is a correspondence between the three possible similarity types of $\CD$ and three real division algebras. It is natural to ask - is there a correspondence between the ten Dyson Types and ten real division superalgebras? This question was first raised in \cite{Freed:2013}, and formulated as a conjecture about a 1-1 correspondence between the two. Some examples are worked out in \cite{Moore:2013}. One of the main goals of this paper is to show that the classifications by Dyson Types and graded commutants are essentially the same. We formulate this as a theorem and prove it below.

 \begin{theorem} \label{conj}
 There is a 1-1 correspondence between the Dyson Types and real division superalgebras, such that for any given $\phi$-irrep of any $\mathbb{Z}_2$-graded group $(G,\phi)$, the endomorphism superalgebra $\CW$ for this irrep corresponds to the Dyson Type extracted from this irrep.
\end{theorem}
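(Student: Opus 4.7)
The plan is to compute $\CW = \CW^{0}\oplus \CW^{1}$ for a representative of each Dyson Type and match it to the corresponding entry in the $\mathbb{D}$ column of Table \ref{table}. The starting observation is that $\CW^{0}=\CZ$: any even element of $\CW$ must commute with $I$ (hence is $\IC$-linear) and with every $\rho(g)$ (the twist $\phi(g)$ disappears from the equivariance condition for an even map), so $\CW^{0}$ coincides with the commutant $\CZ$ of $\CD$ inside $\End_{\IR}(V_{\IR})$. As an $\IR$-algebra $\CZ$ equals $\IR$, $\IC$, or $\IH$ according as $\CD$ is of real, complex, or quaternionic type (Table \ref{table}); by Wall's classification (Appendix \ref{Brauer}) the real division superalgebras with these even parts are exactly $\{\IR,\cl_{1},\cl_{-1}\}$, $\{\IC,\cl_{2},\cl_{-2},\IC\ell_{1}\}$, and $\{\IH,\cl_{3},\cl_{-3}\}$ respectively. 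This already matches the three row groups of Table \ref{table} of sizes $3,4,3$.

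Within each family the superalgebra is pinned down by the odd part $\CW^{1}$: by super-Schur it is either zero or a rank-one $\CW^{0}$-bimodule generated by an odd element $\epsilon$, and the pair $(\epsilon^{2}\in \CZ,\; c\mapsto \epsilon c \epsilon^{-1}\text{ on }\CZ)$ uniquely determines the Wall class. So for each Dyson Type I must decide whether $\CW^{1}=0$ and, if not, exhibit an $\epsilon$ and compute $\epsilon^{2}$ together with its conjugation action on $\CZ$.

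For the three \emph{truly complex} Dyson Types $\IR\IC$, $\IC\IC_{1}$, $\IH\IC$ (those where $\CA$ is of complex type) the plan is to show $\CW^{1}=0$, so that $\CW\cong \CW^{0}$ lands on the entries $\IR,\IC,\IH$ of the $\mathbb{D}$ column. In the single-component case $\IR\IC$, $\rho^{\IC}$ has no antilinear $G^{0}$-self-intertwiner at all, so already no $\IC$-antilinear map can commute with $\rho(g)$ for $g\in G^{0}$. In the two-component cases $\IC\IC_{1}$ and $\IH\IC$ any antilinear $G^{0}$-intertwiner must exchange the two complex components of $\rho^{\IC}$; the extra requirement of anticommuting with $\rho(a)$ for $a\in G^{1}$ is then shown to be inconsistent. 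For the remaining seven \emph{Clifford} Dyson Types (where $\CA$ contains a real or quaternionic block) the plan is to construct $\epsilon\in \CW^{1}$ from the three available ingredients $I$, $\rho(a)$, and a real or quaternionic structure $J$ on $\rho^{\IC}$ (with $J^{2}=\pm 1$), as a product of the form $J$, $J\rho(a)$, or $J\rho(a)I$, chosen so as to be $\IC$-antilinear and to anticommute with $\rho(a)$. Its square then reduces to a scalar multiple of $\rho(a^{2})J^{\pm 2}\in \CA\subset\CZ$, whose sign and position in $\CZ$ are to be matched against the square of the generating Clifford element in the corresponding $\cl_{\pm k}$ or $\IC\ell_{1}$.

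The main obstacle is this last explicit construction: verifying, for each of the seven Clifford Types, that $J$ and $a$ can be chosen so that the candidate $\epsilon$ really lies in $\CW^{1}$, and that the computed $\epsilon^{2}$ together with its conjugation action on $\CZ$ match exactly the data predicted by Table \ref{table}. This is essentially a bookkeeping task using the commutation rules $\rho(g)I=\phi(g)I\rho(g)$, $JI=-IJ$, and $J\rho(g)=\rho(g)J$ for $g\in G^{0}$, combined with the fine structure of $\CB$ and $\CA$ recorded in \eqref{DB}--\eqref{CCtoA}; it is made more delicate in the two-component $\IC$-family Dyson Types because $J$ may swap the components of $\rho^{\IC}$ and $\epsilon^{2}$ must then be evaluated blockwise. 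Once $\epsilon$ and $\epsilon^{2}$ have been pinned down case by case, Wall's classification fixes $\CW$ up to graded isomorphism and reproduces the $\mathbb{D}$ column of Table \ref{table}.
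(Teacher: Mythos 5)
Your strategy is genuinely different from the paper's: rather than passing through the intermediate superalgebra $\CA_{s}$ generated by $\rho(G)$ and then computing graded commutants inside $\cl_2 \ot \IR(n)$ (the paper's route, via Proposition~\ref{DysonTripleToAs} and Proposition~\ref{double-comm}), you attack $\CW$ head-on by identifying $\CW^0 = \CZ$ and then trying to pin down $\CW^1$. The observation that $\CW^0 = \CZ$ is correct and is an economical way to see the coarse grouping of the ten Types into families of sizes $3,4,3$.

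However, the decisive step -- deciding for which Types $\CW^1 = 0$ -- contains a genuine error. You assert that the Types with $\CA$ of complex type are exactly $\IR\IC$, $\IC\IC_1$, $\IH\IC$, and that these are the three with $\CW^1 = 0$. But from Table~\ref{table} and Eq.~\eqref{CCtoA} there are \emph{four} Dyson Types whose $\CA$ has complex simple components: $\IR\IC$ ($\CA\cong\IC(n)$), $\IC\IC_1$ ($\CA\cong\IC(m)\oplus\IC(m)$), $\IC\IC_2$ ($\CA\cong 2\IC(m)$) and $\IH\IC$ ($\CA\cong 2\IC(m)$). So $\IC\IC_2$ does not ``contain a real or quaternionic block,'' and your recipe of building $\epsilon$ from a real or quaternionic structure $J$ on $\rho^{\IC}$ has nothing to work with there. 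Yet $\IC\IC_2$ has $\CW\cong\IC\ell_1$, which is \emph{not} purely even, so the inference ``$\CA$ complex $\Rightarrow$ no antilinear $G^0$-intertwiner $\Rightarrow\CW^1 = 0$'' is false in this case. Indeed Table~\ref{table} shows $\CX = m\IC(2) \supsetneq \CY = m\IC\oplus m\IC$ for $\IC\IC_2$, so there \emph{are} odd (antilinear) $G^0$-equivariant endomorphisms, arising not from a real/quaternionic structure on an irreducible block but from the fact that the two blocks of $\rho^{\IC}$ are \emph{isomorphic} (diagonal $\CA$) and hence admit an antilinear exchange map. Your heuristic also does not discriminate between $\IH\IC$ and $\IC\IC_2$, which have the same $\CA$ up to isomorphism ($2\IC(m)$) yet the former really does have $\CW^1=0$ while the latter does not; the dichotomy must use how $\CA$ sits inside $\CB$ (equivalently, whether $I\in\CA$), which your plan as stated never invokes. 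Until this case analysis is corrected -- and the ``bookkeeping'' for the remaining Types, which you acknowledge is not carried out, is completed -- the argument does not establish the correspondence.
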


For any $\phi$-irrep, we can determine its Dyson Type and the endomorphism superalgebra $\CW$ separately. We shall prove that, for any $(G, \phi)$ and any $\phi$-irrep the two are governed by the 1-1 correspondence of ten objects in the first two columns of Table \ref{table}.

\subsection{The Road Map}
We prove the theorem in two steps. Firstly, for any $\phi$-irrep we introduce the superalgebra $
\CA_{s}$ as a superalgebra generated by $\rho(g)$ for all $g\in G$, where the grading is inherited from that of $G$. We will prove that for any $\phi$-irrep, $\CA_{s}$ and the Dyson triple extracted from this $\phi$-irrep are in 1-1 correspondence, so that there is a map \begin{equation}
    Dyson \;triples \to Superalgebras\;.
\end{equation}
Where, $Superalgebras$ stands for the set of superalgebras $\CA_{s}$  which can arise from  $\phi$-irreps. Note, there are restrictions on $\CA_{s}$ dictated by the fact they occur from representations of graded groups, so they are not general superalgebras. In the previous section, we introduced the equivalence relation of having a common tensor factor $\IR(n)$; under this equivalence Dyson triples descend to Dyson Types, and $Superalgebras$ descend to $GroupSAlg$. There is an isomorphism of sets:
\begin{align}
\label{beta}
   \beta:\quad  Dyson \;Types &\to GroupSAlg\;,\\
   Dyson\; Type &\mapsto [\CA_{s}]\;.
\end{align}
where $GroupSAlg$ stands for the set of equivalence classes of the superalgebras $\CA_{s}$. It turns out that $GroupSAlg$ contains all the ten classes from $sBR(\IR)$.

The next step will be to calculate the graded commutant of the superalgebra $\CA_{s}$ within $\End_{\IR}(V_{\IR},I)$. This graded commutant is nothing but an endomorphism superalgebra of the $\phi$-irrep. Taking the graded commutant is an invertible operation, which defines an isomorphism of sets
\begin{align}
\label{gamma}
   \gamma:  GroupSAlg &\to sBR(\IR)\;\\
   [\CA_{s}]&\mapsto [\CW]\;.
\end{align}

Our goal is to end up with an isomorphism of sets $Dyson \;Types\to sBR(\IR)$ not depending on the specific group and representation. We achieve it by constructing the isomorphism $\gamma \circ \beta$ without referring to any specific $\phi$-irrep. We analyze what are the algebraic conditions the $\phi$-irrep has to satisfy in order to have some fixed Dyson Type. Those conditions allow us to fix the corresponding superalgebra $\CA_{s}$ up to equivalence. In its turn, any simple superalgebra has a unique graded commutant within $\End_{\IR}(V_{\IR},I)$. In fact, we can go other way unambiguously, such that $\gamma \circ \beta$ is an isomorphism of sets.

\subsection{Group Superalgebras}

As we discussed above, the full matrix algebra $\End_{\IR}(V_{\IR})$ admits a $\IZ_2$-grading given by the adjoint action of the complex structure. Then, the $\phi$-representation $(\rho, V_{\IR},I)$ can be seen as a grade-preserving homomorphism $\rho: G\to \End_{\IR}(V_{\IR},I)$. This morphism can always fit into the commutative diagram:
\begin{center}
\begin{tikzcd}
  (G,\phi) \arrow{dr}  \arrow{rr}{\rho} &                         & \End_{\IR}(V_{\IR},I)\\
  &  \IR^{\phi}[G]  \arrow{ur}{\psi} &
\end{tikzcd}
\end{center}
where $\IR^{\phi}$ is the group superalgebra (a super vector space with the basis given by the elements of $(G,\phi)$), and $\psi$ is an $\IR$-linear grade-preserving injective map. Let us denote  the image of $\psi$ as $\CA_{s}$. Alternatively, $\CA_{s}$ is an $\IR$-linear span of $\{\rho(g)|\;g\in G\}$. The superalgebra $\CA_{s}$ is a close analog of Dyson's algebras associated to the $\phi$-irrep. The even part of $\CA_{s}$ is nothing but the Dyson's $\CA\equiv \CA_{s}^{0}$, and $\CA_{s}^{1}$ is the image of the odd elements $\CA_{s}^{1}\equiv \{\rho(g)|\;g\in G,\; \phi(g)=-1\}$.

Let us notice that the largest of the Dyson algebras $\CD$ contains $\CA_{s}^{\scriptstyle{ungr}}$ together with an element which commutes with $\CA_{s}^{0}$ and anticommutes with $\CA_{s}^{1}$: $\CD = \la \CA_{s}^{\scriptstyle  {ungr}}, I\ra$, where angle brackets stand for additive and multiplicative closure.

As follows from definitions, $\CA_s$ is a simple superalgebra for any $\phi$-irrep. Since it arises from irreps of non-trivially graded groups, $\CA_{s}$ is not concentrated in the even degree. Moreover, there is an isomorphism of vector spaces $\CA_{s}^{0}\cong \CA_{s}^{1}$. We can always present this morphism as a multiplication by an odd element $u\in \CA_{s}^{0}$. Indeed, $G$ admits a coset decomposition $G= G^{0}\bigcup \tilde g G^{0}$, where $\tilde g$ is a coset representative; we can take $u=\rho(\tilde g)$, and it has the properties of a coset representative as well. Summarizing, there is an invertible element $u$ such that $\CA_{s}^{1}=u \CA_{s}^{0}$.

\subsection{Aside: \texorpdfstring{$\phi$}{TEXT}-Representation As Real-Linear Representations}
\label{Bosonization}
Let us clarify what we mean by the relation $\CD \equiv \la \CA_{s}^{ungr},I\ra$. Definition \ref{phi-rep-Def} of a $\phi$-representation of   $G$ can be recast as the definition of an ordinary representation of a group $\tilde G$ related to $G$. Indeed, $\tilde G$ will simply
be a semidirect product of $G$ with $U(1)$ (or $\mathbb{Z}_4$).

 Let us identify $\phi$ with the non-trivial homomorphism $\phi: G \to \mbox{Aut}(U(1))\cong \IZ_2$. The non-trivial generator of $\Aut(U(1))$ is the Galois action $\omega\to \omega^{-1}$. Using $\phi$, we can define a semidirect product $\tilde G\equiv U(1)\rtimes_{\phi} G$. It is clear from the definitions that any $\phi$-representation $(\rho, V_{\IR},I)$ of $(G,\phi)$ is equivalent to a real linear representation $(\rho, V_{\IR})$ of $\tilde G$. Further, we can introduce the group algebra $\IR[\tilde G]$: its image under $\rho$ will be nothing but $\CD$. This useful point of view was used to describe Dyson's classification in \cite{rumynin2020real}.

\subsection{Proof Of The Conjecture}
Firstly, we wish to convert the Dyson triples into superalgebras. With the $\CA_{s}$ defined above, we claim that the Dyson triples are equivalent to $\CA_{s}$.
\begin{proposition}
\label{DysonTripleToAs}
For any $\phi$-irrep, there is a 1-1 correspondence between the Dyson triple and the superalgebra $\CA_{s}$.
\end{proposition}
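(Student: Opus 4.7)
The plan is to construct mutually inverse maps between Dyson triples and superalgebras $\CA_s$ arising from $\phi$-irreps, so that each determines the other. The forward map $\CA_s \mapsto (\CD,\CB,\CA)$ is essentially tautological: given the superalgebra, set $\CA := \CA_s^0$, $\CB := \la \CA_s^0, I\ra$, and $\CD := \la \CA_s^{\mathrm{ungr}}, I\ra$. By construction of $\CA_s$ as the image of $\psi$, the even part $\CA_s^0$ is the $\IR$-span of $\rho(g)$ for $\phi(g)=1$ while $\CA_s^{\mathrm{ungr}}$ is the $\IR$-span of all $\rho(g)$, so these definitions coincide with the Dyson algebras of Definition \ref{Dyson algs}.

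For the reverse map $(\CD,\CB,\CA) \mapsto \CA_s$, I would equip $\CD$ with the $\IZ_2$-grading inherited from $\End_\IR(V_\IR,I)$; this is well defined because $I \in \CD$ and $I$ is itself even for this grading. Every element of $\CB = \la \CA, I\ra$ commutes with $I$ so lies in the even part $\CD^0$, and the converse inclusion follows from the decomposition $\CD = \CA_s^{\mathrm{ungr}} + \CA_s^{\mathrm{ungr}}\cdot I$ together with the parity accounting $\CA_s^0 \cdot I^\epsilon \subseteq \CD^0$; hence $\CB = \CD^0$. This recovers $\CA_s^0 = \CA$ at once, and leaves only $\CA_s^1 \subseteq \CD^1$ to be identified. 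Using the coset decomposition $G = G^0 \sqcup \tilde g G^0$ discussed above the proposition, any invertible odd $u \in \CD^1$ of the form $\rho(\tilde g)$ satisfies $\CA_s^1 = u \cdot \CA$. Accordingly I pick such a $u$, declare $\CA_s := \CA \oplus u\CA$ with the multiplication inherited from $\CD$, and verify that $\la \CA_s^{\mathrm{ungr}}, I\ra = \CD$, which closes the correspondence.

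The main obstacle I anticipate is the well-definedness of the reverse map in the case $I \notin \CA$. In that regime $\CD^1 = \CA_s^1 \oplus I\cdot\CA_s^1$ strictly contains $\CA_s^1$, and the choice of $u$ inside $\CD^1$ is ambiguous up to multiplication by $I$ (and by units of $\CA$). A short calculation using the anticommutation relation $uI = -Iu$ gives $(Iu)^2 = u^2$ and $(Iu)\, a\, (Iu)^{-1} = u\, a\, u^{-1}$ for any $a \in \CA$, which shows that replacing $u$ by $Iu$ produces a different subspace inside $\CD^1$ but an isomorphic superalgebra. Hence the isomorphism class of $\CA_s$ is pinned down uniquely, yielding the claimed 1-1 correspondence. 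A row-by-row check against Table \ref{table} can then serve as a sanity verification and simultaneously compute the isomorphism class of $\CA_s$ in each of the ten Dyson Types, providing the input for the subsequent map $\beta$ in \eqref{beta}.
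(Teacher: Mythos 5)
Your framing (build mutually inverse maps via the odd coset element $u$) is a reasonable reorganization of the paper's argument, but there is a genuine gap in the well-definedness of your reverse map $(\CD,\CB,\CA)\mapsto\CA_s$, and it is precisely the point where the paper's case-by-case computation carries the weight.

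You characterize the ambiguity in the choice of $u\in\CD^1$ as being ``up to multiplication by $I$ and by units of $\CA$'' and verify only the substitution $u\mapsto Iu$. This underestimates the ambiguity. When $I\notin\CA$, the odd part decomposes as $\CD^1=u\CA\oplus Iu\CA$, so a priori any $u'=ua+Iub$ with $a,b\in\CA$ that is invertible, normalizes $\CA$, and squares into $\CA$ is a candidate. Such $u'$ need not lie in $u\CA^\times\cup Iu\CA^\times$, and it is not obvious (and in fact is false absent further constraints) that $(u')^2$ lies in the same class of $\IR^\times/(\IR^\times)^2$ as $u^2$. Since $\CA\oplus u\CA$ with $u^2=+\unit$ and $u^2=-\unit$ are generally non-isomorphic superalgebras (e.g.\ $\cl_1$ vs.\ $\cl_{-1}$, or $\cl_3$ vs.\ $\cl_{-3}$), your conclusion ``hence the isomorphism class of $\CA_s$ is pinned down uniquely'' does not follow from the partial calculation you give. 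The constraint that resolves the ambiguity is $\la\CA_s^{\mathrm{ungr}},I\ra=\CD$: the structure of $\CD$, not merely of $\CA$ and $\CB$, fixes the sign of $u^2$. The paper states this explicitly in the $\IH\IH$ case (``If $u^2=-\unit$, then $\CD=\IR(4)$. If $u^2=\unit$, then $\CD=\IH(2)$. Since we are interested in $\CD=\IH(2)$, the coset representative $u$ should satisfy $u^2=\unit$''). You mention the verification $\la\CA_s^{\mathrm{ungr}},I\ra=\CD$ only as a closing sanity check, whereas it is in fact the step that rules out the spurious $u$; and the ``row-by-row check against Table \ref{table}'' you defer to the end is exactly the paper's proof. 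So the abstract portion of your argument does not yet circumvent the case analysis; it only repackages its setup.
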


Any triple of algebras $(\CD, \CB, \CA)$ coming from a $\phi$-irrep implicitly contains information about $u$ because we can generate $\CD$ with the elements of $\CA$, $u$, and $I$. Let us recall that the Dyson Type contains Dyson triples differing only by a tensor factor of $\IR(n)$. Assume $\CD\cong \la \CA, u,I\ra$ such that $\CA_{s}\cong \CA\oplus u \CA$, then $\CD'\cong \CD\ot \IR(n)\cong \la \CA\ot \IR(n), u, I\ra$ corresponds to $\CA_{s}'\cong \CA_{s}\ot \IR(n)$. From here it follows that it is enough to consider the minimal Dyson triple: a Dyson triple from the Table \ref{table} having the parameter $n$ (or $m$, or $p$) equal to 1.

Our proof is by direct computation. Beginning with the minimal Dyson triple $(\CD,\CB,\CA)$, we reconstruct $u$. Having obtained $\CA_{s}$, we get back to a triple. Within the minimal irrep, we always pick the standard complex structure. As a real matrix acting on $V_{\IR}^{2q}$, where $q$ is the complex dimension of the minimal $\phi$-irrep, it always looks like $I=i\ot \unit_{q}$, where $i\in \IC$ is represented by the matrix $\bigl( \begin{smallmatrix}0 & -1\\ 1 & 0\end{smallmatrix}\bigr)$. We are looking for an operator $u$, anticommuting with $I$ such that $\CD\cong \la \CB, u\ra$. When working with $\IR(2)$, we always pick the ``quaternionic'' basis.

\paragraph{$(\IR\IR)\Longrightarrow \cl_{1}$.}
Here, $u$ is the real structure. It is an element of $\IR(2)$ squaring to $+\unit$ and anticommuting with $I$. Then, $\CA_{s}\cong \cl_{1}$.

\paragraph{$(\IR\IC)\Longrightarrow \cl_{2}$.}
Here, $\CA=2\IC$ and $\CB=\IC\oplus \IC \cong \IC\ot \la \unit, P\ra$, where $P$ is an abstract element of $\IR(2)$ squaring to $\unit$. In order to complete $\CB$ to $\CD$, we take $u=P'\ot P'$, where $P'\in \IR(2)$ is an operator squaring to $\unit$, and anticommuting with $i$ and $P$.  This shows $\CA_{s}=\cl_{2}$.

\paragraph{$(\IR\IH)\Longrightarrow \cl_{3}$.}
Given $\CA=\IH$ and $\CB=\IC(2)$, the complex structure $I=i'\in \IH^{\scriptstyle  {opp}}$. Then, $u=j'\in \IH^{\scriptstyle  {opp}}$. One can check that if $\CA_{s}^{1}= j'\IH$, then $\CA_{s}\cong \cl_{3}$.

\paragraph{$(\IC\IR)\Rightarrow \cl_{1,-1}$.}

Here, $\CA=2\IR\oplus 2\IR\cong 2\IR \otimes \la \unit, P\ra $. The complex structure $I=i\ot \unit$, and $u$ is an antilinear operator squaring to $+\unit$, let us call it $P'\ot P'$, where $P'\in \IR(2)$, is an operator anticommuting with $P\in \IR(2)$ and squaring to $+\unit$. This leads to $\CA_{s}\cong \cl_{1,-1}$

\paragraph{$(\IC\IC_1)\Longrightarrow \IC\ell_{2}$.}
Here, $\CA=\IC\oplus \IC\cong \IC \otimes \la \unit, P\ra $. The complex structure $I=i\ot  \unit$,  and $u$ is the same as in the previous case. If appropriately normalized, $u^2=\pm 1$ leads to $\CA_{s}\cong\IC\ell_{2}$.

\begin{Remark}
Even though, the superalgebras $\cl_{1,-1}$ and $\IC\ell_2$ are similar to $\IR$ and $\IC$ correspondingly, we keep writing them as they are. We are stressing that $\phi$ is a surjective map, and consequently $\CA_{s}$ is not a purely even superalgebra. Nothing prevents $\CA_{s}$ to be similar to a purely even superalgebra.
\end{Remark}

\paragraph{$(\IC\IC_2)\Longrightarrow \IC\ell_{1}$.}
Here, $\CA=2\IC \cong \unit \otimes \IC $. The complex structure $I=i\ot  \unit$,  and $u$ is the same as in the previous two cases. If appropriately normalized, $u^2=\pm 1$ leads to $\CA_{s}\cong\IC\ell _{1}$.

\paragraph{$(\IC\IH)\Longrightarrow \cl_{-4}$.}
Here, $\CA=\IH \ot \la \unit, P\ra$. The complex structure is $I=i'\ot \unit$ with $i'\in \IH^{\scriptstyle{opp}}$, and $u$ is a complex-linear combination of $\unit\ot P'$ and $\unit\ot i$. Then, $\CA_{s}^{1}= u \CA $ leads to $\CA_{s}\cong \cl_{4}$ (as well as to $\CA_{s}\cong \cl_{-4}$).

\paragraph{$(\IH\IR)\Longrightarrow \cl_{-1}$.}
In this case, $\CA=4\IR$, $\CB=2\IC$, such that $I=i$ and $u$ is any operator anticommuting with $I$ and squaring to $-\unit$, so is $u=j\in \IH$. Then, $\CA_{s}^{1}=\{j\}$, and $\CA_{s}\cong \cl_{-1}$.

\paragraph{$(\IH\IC)\Longrightarrow \cl_{-2}$.}
In this case, $\CA=2\IC$, $\CB=2\IC$, such that $I=i$ and $u=j\in \IH$. One may identify $\CA_{s}\cong \cl_{-2}$.

\paragraph{$(\IH\IH)\Longrightarrow \cl_{-3}$.}
Let us use the dimension argument. The superalgebra $\CA_{s}$ is a simple superalgebra of the real dimension $8$ (this dimension is twice the dimension of $\CA$). The candidates are the Clifford algebras of rank 3: $\cl_3$, $\cl_{-3}$, $\cl_{2,-1}\cong\cl_1 \ot \IR(2)$, and $\cl_{1,-2}$. Matching the even parts leaves us with $\cl_{3}$, and $\cl_{-3}$.

If $\CA\cong \IH$, the full algebra $\CD$, as a vector space, can be decomposed as $\CD= \IH\oplus I \IH\oplus u\IH\oplus uI \IH$. Then, depending on the class of $u^2$ in $\IR^{2}/\IR^{*}$, we get the two possibilities. If $u^2=-\unit$, then $\CD=\IH \ot \IH \cong \IR(4)$. If $u^2=\unit$, then $\CD=\IH \ot \IR(2)\cong \IH(2)$. Since we are interested in $\CD=\IH(2)$, the coset representative $u$ should satisfy $u^2=\unit$. Then, $\CA_{s}^{1}\cong u \IH$ implies $\CA_{s}\cong \cl_{-3}$.
\paragraph{}

Now we wish to obtain a Dyson triple from the superalgebra $\CA_{s}$. Recall, $\CD=\la \CA_{s}^{\scriptstyle  {ungr}}, I\ra$ such that $I$ defines the grading on $\CA_{s}$. So, the dimension of $\CD$ is twice that of $\CA_{s}^{\scriptstyle}$ in the case $I\in \CA_{s}^{0}$ and it is four times the dimension of $\CA_{s}^{0}$ if $I\notin \CA_{s}^{0}$. Summarizing, we say that $\CD$ is a simple algebra containing  $\CA_{s}^{\scriptstyle{ungr}}$ as a subalgebra or coinciding with the latter.

For any superalgebra, its even subalgebra and the algebra obtained by forgetting the grading is known. Let us recall some relevant cases in the following Table \ref{table2}.
\begin{table}[h!]
\begin{center}
\begin{adjustbox}{width=4.5cm}
\begin{tabular}{ |c|c|c|c|c|c|c|c| }
\hline
 $[\CA_{s}]$ & $[\CA_{s}^0]$ & $[\CA_{s}^{\scriptstyle  {ungr}}]$ \\
\hline
$\cl_{-4}$ & $ \IH \oplus \IH $  & $\IH(2)$  \\
\hline
$\cl_{-3}$ & $\IH$ & $\IH\oplus \IH$ \\
\hline
$\cl_{-2}$ & $\IC$ & $ \IH$ \\
\hline
$\cl_{-1}$ & $\IR$  & $\IC$  \\
\hline
$\cl_{1,-1}$ & $\IR\oplus \IR$ & $\IR(2)$ \\
\hline
$\cl_{1}$ & $\IR $ & $\IR\oplus \IR$ \\
\hline
$\cl_{2}$ & $\IC$ & $\IR(2)$ \\
\hline
$\cl_{3}$ & $ \IH$  & $\IC(2)$  \\
\hline
$\IC\ell_1$ & $\IC$ & $\IC\oplus \IC $\\
\hline
$\IC\ell_2$ & $\IC\oplus \IC$ & $\IC(2) $\\
\hline
\end{tabular}
\end{adjustbox}
\caption{The set of distinguished Morita-classes of superalgebras appearing form $\phi$-irreps.}
\label{table2}
\end{center}
\end{table}

\paragraph{$\cl_{1}\Longrightarrow (\IR\IR) $.} In this case, $\CA_{s}^{\scriptstyle{ungr}}=\IR\oplus \IR$. The order of $\CD$ is $4$, it is either $\IR(2)$, or $\IH$. Obviously, it cannot be $\IH$, as there is not enough elements squaring to $-\unit$. So, $\CD=\IR(2)$.

\paragraph{$\cl_{3}\Longrightarrow (\IR\IH)$.} The order of $\CD$ is 16, so it is either $\IH(2)$ or $\IR(4)$. We are completing $\IC(2)$ with an operator $I$ commuting with $\CA_{s}^{0}=\IH$. In this case, $\CA_{s}^{1}$ can be identified with $j'\in \IH^{\scriptstyle{opp}}$, where $j'$ is odd and anticommutes with $I$. Together, $I$ and $j'$ generate the full quaternion algebra $\IH^{\scriptstyle{opp}}$. So, $\CD=\IH\IH^{\scriptstyle{opp}}\cong \IR(4)$.

\paragraph{$\cl_{1,-1}\Longrightarrow (\IC\IR)$.} The order of $\CD$ is 8, so this is $\IC(2)$.

\paragraph{$\IC\ell_{1}\Longrightarrow (\IC \IC_2)$.} The order of $\CD$ is 8, so this is $\IC(2)$.

\paragraph{$\cl_{4}\Longrightarrow (\IC\IH) $.}
The order of $\CD$ is 32, so this is $\IC(4)$.

\paragraph{$\cl_{-1}\Longrightarrow (\IH\IR) $.}
The order of $\CD$ is 4, the candidates are $\IR(2)$ and $\IH$. Clearly, $\CD=\IH$.

\paragraph{$\cl_{-3}\Longrightarrow (\IH\IH) $.}
The order of $\CD$ is 16. The candidates are $\IH(2)$ and $\IR(4)$.  However, $\IR(4)$ does not contain $\CA_{s}^{\scriptstyle{ungr}}=\IH\oplus \IH$ as a subalgebra, while $\IH(2)$ clearly contains it.

This finishes the proof of proposition \ref{DysonTripleToAs}.

\subsection{Graded Commutants}
We have transferred data encoded in Dyson's analysis into the language of superalgebras. Thanks to that, computation of graded commutants of irreducible $\phi$-reps defined in \ref{RepR} becomes a conventional computation of graded commutants of group superalgebras within $\End_{\IR}(V_{\IR},I)$.

\begin{definition}
\label{grcom}
We define the graded commutant $\CZ_{s}(A,B)$ of a superalgebra $A$ within a superalgebra $B$ as the subset of $B$ containing elements of the form $w=w^0+w^1\in B$ such that
\begin{align}
\label{grcomm1}
   & w^{0}a\;=+a \;w^{0}\;,\\
   \label{wcomm2}
   & w^{1}a^{0}=+a^{0}w^{1}\;,\\
   \label{wcomm3}
    &w^{1}a^{1}=-a^{1}w^{1}\;.
\end{align}
In this notation, $\CZ_{s}(A)\equiv \CZ_{s}(A,A)$ is the supercenter of $A$.
\end{definition}

Matching this definition with that of $\CW$, we can identify the graded commutant of a $\phi$-irrep with the graded commutant of its group superalgebra within $\End_{\IR}(V_{\IR},I)$:
\begin{equation*}
    \End^{G}_{\IR}(V_{\IR},I)\equiv \CZ_{s}(\CA_{s},\End_{\IR}(V_{\IR},I))\;.
\end{equation*}

To proceed further, let us analyze the superalgebra $\End_{\IR}(V_{\IR},I)$. It is a real simple superalgebra with the supercenter $\IR$, so that it is similar to one of the real Clifford algebras. Picking the standard complex structure in $\End_{\IR}(V_{\IR})$, one can directly check the following isomorphisms:
\begin{equation}
    \End_{\IR}(\IR^{2},I)\cong \cl_2\;,\quad \End_{\IR}(\IR^{4},I)\cong \cl_{3,-1}\;,\quad \End_{\IR}(\IR^{8},I)\cong \cl_{4,-2}\;.
\end{equation}
There are two more isomorphisms: $\End_{\IR}(\IR^{4},I)\cong \cl_2\ot \IR(2)$ and $\End_{\IR}(\IR^{8},I)\cong \cl_{2}\ot \IR(4)$, which will be useful. In general, we define a grading on $\End_{\IR}(\IR^{2n},I)$ by the adjoint action of the standard complex structure $I=i\ot \unit_{n}$. Then, as a super vector space, $\End_{\IR}(\IR^{2n},I)\cong \la \unit,i\ra \ot \IR(n)\oplus \{P,P'\}\ot \IR(n)$, where we used the ``quaternionic'' basis for $\IR(2)$. Since we consider $\IR(n)$ as an ungraded algebra and the tensor product is the usual tensor product, $\End_{\IR}(\IR^{2n},I)$ is isomorphic to $\cl_{2}\ot \IR(n)$, so is similar to $\cl_2$. By the same argument,
\begin{equation}
    \End_{\IR}(\IR^{4n},I)\cong \cl_{3,-1}\ot \IR(n)\;,\; \End_{\IR}(\IR^{8n},I)\cong \cl_{4,-2}\ot \IR(n)\;.
\end{equation}

 Now, our aim is to compute graded commutants of superalgebras from the second column of the following Table \ref{table3} within the superalgebras from the third column. The resulting commutant $\CW$ is presented in the fourth column; we will calculate it shortly.
\begin{table}[h!]
\begin{center}
\begin{adjustbox}{width=8cm}
\begin{tabular}{ |c|c|c|c|c|c|c|c| }
\hline
$\mbox{Type}$ &  $\CA_{s}$ & $\End_{\IR}(V_{\IR},I)$ & $\CW$ \\
\hline
$\IR\IR$ &  $\cl_{1}\ot \IR(n)$  & $\cl_2\ot \IR(n)$ & $\cl_1$ \\
\hline
$\IR\IC$ & $\cl_{2}\ot \IR(n)$ & $\cl_2\ot \IR(n)$  & $\IR$ \\
\hline
$\IR\IH$  & $\cl_{3}\ot \IR(m)$ & $\cl_{3,-1} \ot \IR(n)$ & $\cl_{-1}$  \\
\hline
$\IC\IR$  & $\cl_{1,-1}\ot \IR(m)$ & $\cl_{3,-1}\ot \IR(m)$ &  $\cl_2$ \\
\hline
$\IC\IC_1$ & $\IC\ell_2\ot \IR(m)$ & $\cl_{3,-1}\ot \IR(m)$  & $\IC$\\
\hline
$\IC\IC_2$  & $\IC\ell_1\ot \IR(m)$ & $\cl_{3,-1}\ot \IR(m)$ & $\IC\ell_1$\\
\hline
$\IC\IH$  & $\cl_{4}\ot \IR(p)$ & $\cl_{4,-2}\ot \IR(p)$ & $\cl_{-2}$  \\
\hline
$\IH\IR$  & $\cl_{-1}\ot \IR(m)$ & $\cl_{3,-1}\ot \IR(m)$ & $\cl_{3}$  \\
\hline
$\IH\IC$ & $\cl_{-2}\ot \IR(m)$  & $\cl_{3,-1}\ot \IR(m)$ & $\IH$  \\
\hline
$\IH\IH$ & $\cl_{-3}\ot \IR(p)$ & $\cl_{4,-2}\ot \IR(p)$ & $\cl_{-3}$  \\
\hline
\end{tabular}
\end{adjustbox}
\caption{The Dyson Types matched with the superalgebras $\CA_{s}$.}
\label{table3}
\end{center}
\end{table}

First of all, let us notice that $\CZ_{s}(A\ot \IR(n),B\ot\IR(n))\cong \CZ_{s}(A,B)$, for any simple superalgebras $A$ and $B$, what is a consequence of the known isomorphism $\CZ(\IR(n),\IR(n))\cong\IR$. Then, we are left with a calculation of graded commutants of some Clifford algebras within other Clifford algebras. Recall that we define Clifford algebras through a vector space equipped by a symmetric bilinear form, see Appendix \ref{Clifford algebras}. The following proposition relates operations of taking commutants in the context of superalgebras with operation of taking orthogonal complement in the context of bilinear forms on vector spaces.
\begin{proposition}
\label{double-comm}
Let $(V,B)$ and $(W,B')$ be a pair of a real vector spaces equipped with symmetric bilinear forms such that $V\subset W$ and $B'|_{V}=B$. Then, $\CZ_{s}(\cl(V,B),\cl(W,B'))\cong \cl(V^{\perp},B^{\perp})$, where $V^{\perp}$ is the orthogonal complement to $W$ in the sense of the bilinear form $B'$, and $B^{\perp}=B'|_{V^{\perp}}$.
\end{proposition}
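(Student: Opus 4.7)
The plan is to reduce the identity to an explicit graded tensor product decomposition of the ambient Clifford algebra. I would first choose an orthogonal basis of $V$ with respect to $B$ and extend it to an orthogonal basis of $W$ with respect to $B'$; because $B'|_V = B$, this picks out a subspace $V^\perp \subset W$ with $B^\perp := B'|_{V^\perp}$, giving an orthogonal splitting $W = V \oplus V^\perp$. The defining relations of $\cl(W,B')$ then read $vv' + v'v = 2B'(v,v') = 0$ for any $v\in V$ and $v'\in V^\perp$, so that generators from the two summands anticommute. This is precisely the sign rule of the graded tensor product, and it yields a $\IZ_2$-graded algebra isomorphism $\cl(W,B') \cong \cl(V,B)\, \widehat{\otimes}\, \cl(V^\perp, B^\perp)$.

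Once this structural statement is in hand, the easy half of the proposition is immediate: for homogeneous $x\in \cl(V,B)$ and $y\in \cl(V^\perp, B^\perp)$, the relation $(x\,\widehat{\otimes}\, 1)(1\,\widehat{\otimes}\, y) = (-1)^{|x||y|}(1\,\widehat{\otimes}\, y)(x\,\widehat{\otimes}\, 1)$ follows at once from the definition of $\widehat{\otimes}$, and this is exactly super-commutativity. Hence $1\,\widehat{\otimes}\, \cl(V^\perp, B^\perp)$ is contained in $\CZ_s(\cl(V,B), \cl(W,B'))$.

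For the reverse inclusion, I would invoke the super-central simplicity of Clifford algebras of non-degenerate forms, namely $\CZ_s(\cl(V,B)) = \IR$. Writing a putative super-commutant as $z = \sum_i x_i\,\widehat{\otimes}\, y_i$ with linearly independent $y_i$ and homogeneous $x_i$, the condition that $z$ super-commutes with every $a\,\widehat{\otimes}\, 1$ expands to $\sum_i (ax_i - (-1)^{|a||x_i|} x_i a)\,\widehat{\otimes}\, y_i = 0$ for all homogeneous $a\in\cl(V,B)$. Linear independence of the $y_i$ then forces each $x_i$ to lie in $\CZ_s(\cl(V,B)) = \IR$, so $z\in 1\,\widehat{\otimes}\, \cl(V^\perp, B^\perp)$, as required.

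The main obstacle I anticipate is the super-central simplicity input. If $B$ were allowed to be degenerate, then $\cl(V,B)$ would acquire a non-trivial supercenter coming from the radical of $B$, and the argument would have to be refined by splitting off the radical and handling it separately. For the Clifford algebras appearing in Table \ref{table3} the forms are all non-degenerate, so in the intended applications this subtlety does not arise and the tensor product argument closes the proof cleanly.
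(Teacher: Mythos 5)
Your proof is correct and takes essentially the same approach as the paper's one-line proof, which merely points to ``matching the conditions of the graded-commutant definition with the Clifford relations''; you make the same direct computation precise by organizing it through the orthogonal splitting $W=V\oplus V^\perp$, the graded tensor factorization $\cl(W,B')\cong\cl(V,B)\,\hat\otimes\,\cl(V^\perp,B^\perp)$, and the super-centrality $\CZ_s(\cl(V,B))=\IR$ of Clifford algebras with non-degenerate forms. One small remark: your reverse-inclusion step uses only super-centrality of $\cl(V,B)$, not super-central simplicity, so it is slightly more economical than you advertise; and your closing caveat about degenerate forms is harmless here since the paper's definition of a Clifford algebra already assumes non-degeneracy.
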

The proof follows from matching the conditions of \eqref{grcom} and the Clifford relations \eqref{ClifRels}.

Using this proposition, we immediately obtain the graded commutants corresponding to the types $\IR\IR$, $\IR\IC$, $\IR\IH$, $\IC\IR$, $\IC\IH$, $\IH\IR$ in Table \ref{table3}. The four remaining cases should be computed separately.

We begin with $\CZ_{s}(\cl_{-2}, \cl_{3,-1})$. Let $\cl_{3,-1}$ be generated by $x_1, x_2, x_3$, and $y$ such that $x_i^{2}=\unit$ and $y^2=-\unit$. Then, the superalgebra $\cl_{-2}$ generated by odd anticommuting elements $a$ and $b$ and squaring to $-\unit$, can be embedded into $\cl_{3,-1}$ in the following way:
\begin{equation}
    a\to y\;,\quad b \to x_1x_2x_3\;.
\end{equation}
One can verify that the graded commutant is purely even and contains $\unit, x_1x_2, x_1x_3, x_2x_3$, what implies $\CZ_{s}(\cl_{-2},\cl_{3,-1})\cong \IH$. We work out the case $\IH\IH$ next. Let $\cl_{4,-2}$ be generated by $x_i$ with $1\le i\le 4$ and $y_1, y_2$ such that $x_i^2=\unit$ and $y_{j}^2=-\unit$. We generate $\cl_{-3}$ with three odd elements $a,b,c$, and embed it into $\cl_{4,-2}$ as follows:
\begin{equation}
    a\to y_1\;,\quad b\to y_2\;,\quad c\to x_1x_2x_3\;.
\end{equation}
As a result, the commutant contains $x_4, ~x_1x_2x_4,~ x_1x_3x_4,~ x_2x_3x_4$, what implies\\ $\CZ_{s}(\cl_{-3},\cl_{4,-2})\cong \cl_{-3}$.

 We proceed with the case of a $\phi$-irrep of the type $\IC\IC_2$. Let $\cl_{3,-1}$ be generated by $x_1, x_2, x_3, y$ such that $x_i^2=\unit$ and $y^2=-\unit$. In its turn, $\IC\ell_1$ as a real superalgebra contains four elements: $\unit$, $i$, $e$, $ie$ such that $e$ and $ie$ are odd, commuting, and squaring to $\unit$ and $-\unit$ correspondingly. There is the embedding of real superalgebras:
\begin{align}
    \unit &\to \unit\;,\quad\quad \;\;\;e \to x_1\;,\\
    i&\to x_2x_3\;,\quad ie \to x_1x_2x_3\;.
\end{align}
One can verify that the supercommutant for this embedding  $\CZ_{s}(\IC\ell_1,\cl_{3,-1})$ contains $\unit$, $x_2x_3$, $y$, $x_2x_3y$, and can be identified with $\IC\ell_1$. The same computation can be performed for the type $\IC\IC_1$, where $\CA_{s}\sim \IC\ell _2$ and obtain $\CZ_{s}(\IC\ell_2, \cl_{3,-1})$. This finishes the calculations needed to obtain Table \ref{table3}. Clearly, we can use this technique in order to reproduce the similarity type of $\CA_{s}$ from $\CW$. This proves that $\gamma$ defined in \eqref{gamma} is an isomorphism of sets.

\begin{Remark}
The procedure of taking the graded commutant described above is nothing but solving the following equation in the graded Brauer monoid of reals (see Appendix \ref{Brauer}):
\begin{equation}
\label{eq.inBR}
    [\CA_{s}] [\CW] =[\cl_{2}]\;.
\end{equation}
Taking the graded commutant within $\End_{\IR}(V_{\IR},I)$ can be seen as a reflection in the graded Brauer monoid across the element 2.
\end{Remark}
 Our proof can be recast into the following diagram, which is true for any $\phi$-irrep $(\rho, V_{\IR},I)$:
\begin{center}
\begin{tikzpicture}[node distance=2.5cm, auto]
  \node (P) {$(\rho, V_{\IR},I)$};
  \node (B) [right of=P] {$\mbox{Dyson Type}$};
  \node (C) [below of=B] {$[\CA_{s}]$};
  \node (P1) [below of=P] {$[\CW]$};
  \draw[->] (P) to node {$\alpha$} (B);
  \draw[->] (B) to node {$\beta$} (C);
  \draw[->] (C) to node {$\gamma$} (P1);
  \draw[->] (P) to node {$\delta$} (P1);
\end{tikzpicture}
\end{center}
 where $\beta$ and $\gamma$ are   isomorphisms of sets and  the diagram commutes. We explained how to construct the maps $\alpha$, $\beta$, and $\gamma$, and showed that $\beta$ and $\gamma$ are bijective. The fact this is true for any $\phi$-irrep of any $\IZ_2$-graded group and makes the diagram above commutative.

\section{A Frobenius-Schur Indicator For \texorpdfstring{$\phi$}{TEXT}-irreps}
In the original paper \cite{Dyson_1962}, Dyson defined classical Frobenius-Schur indicators for the algebras $\CD$ and $\CA$. Recall, for any unitary representation $(\rho^{\IC},V)$ of the group $G^{0}$, we can define a Frobenius-Schur indicator
\begin{equation}
    \Pi(\rho^{\IC})=\frac{1}{|G^0|}\sum_{u\in G^{0}}\mbox{Tr}(\rho^{\IC}(u^2))\;.
\end{equation}
One can prove that the F-S indicator on any irreducible representation takes one of the three values: $\{-1,0,+1\}$. According to the value of the F-S indicator, we call an irrep potentially quaternionic, complex, or potentially real. As we reviewed above, the $\phi$-irrep, restricted to its unitary piece, might be reducible. However, it contains at most two components of the same type. Then, we can apply the Frobenius-Schur indicator to any of the two irreducible unitary subrepresentations with the same result. We always assume that the F-S indicator is applied to only one component. The three possible values of the F-S indicator correspond to one of the three possible similarity types of the simple component of $\CA$.

Further, Dyson (following V. Bargmann) introduced an indicator distinguishing types of $\CD$. It is defined for any $\phi$-irrep in the following way:
\begin{equation}
    \Pi'(\rho^{\IC})=\frac{1}{|G^1|}\sum_{a\in G^{1}}\mbox{Tr}(\rho^{\IC}(a^2))\;,
\end{equation}
It was proven by Dyson that this indicator also takes the three possible values $\{-1,0,+1\}$, corresponding to the three possible types of $\CD$: $\IH$, $\IC$, and $\IR$ respectively.

These two indicators can be assembled into one indicator
\begin{equation}
    \pi= \Pi'(\rho^{\IC})+i\Pi(\rho^{\IC})\;.
\end{equation}
which has remarkable properties. Set $\omega=0$ if $\pi =0$ and   $\omega\equiv \pi/|\pi|$
when $\pi$ is nonzero. Now, the
 graded Brauer group of reals $Br(\IR)$ is isomorphic to $\IZ_{8}$ which we identify
as the multiplicative group of eighth roots of unity. Choosing a suitable
isomorphism, the image of $\CW$ in $Br(\IR)$ will be identical to the value of $\omega$.
That is, $\omega$ takes the following values depending on the Dyson Type of the $\phi$-irrep:
\begin{align}
\label{super-F-S indicator}
    \omega=\begin{cases}0\;,\quad \quad \quad \quad \quad \quad \IC\IC_1,\; \IC\IC_2\;,\\
    \exp(2\pi i n/8 )\;,\quad \mbox{others}\;.
    \end{cases}
\end{align}
Put differently, we have the following table:
 \begin{table}[h!]
\begin{center}
\begin{adjustbox}{width=6.5cm}
\begin{tabular}{ |c|c|c|c|c|c|c|c| }
\hline
$\CW $ & $Type$  & $\Pi'$ & $\Pi$ & $\omega$ \\
  \hline
 $\cl_{1}$ & $\mathbb{RR}$ & $+1$ & $+1$ & $\exp(\scriptstyle{ +1\cdot \frac{2\pi i}{8}})$  \\
 \hline
$\IR$ & $\mathbb{RC}$  & $+1$ & $0$ & $\exp(\scriptstyle{ 0\cdot \frac{2\pi i}{8}})$ \\
\hline
$\cl_{-1}$ & $\mathbb{RH}$  & $+1$ & $-1$ & $\exp(\scriptstyle{-1\cdot \frac{2\pi i}{8}})$\\
\hline
$\cl_{2}$ & $\mathbb{CR}$  & $0$ & $+1$ & $\exp(\scriptstyle{+2\cdot\frac{2\pi i}{8}})$ \\
\hline
$\cl_{-2}$ & $\mathbb{CH}$  & $0$ & $-1$ & $\exp(\scriptstyle{-2\cdot \frac{2\pi i}{8}})$  \\
\hline
$\cl_{3}$ & $\mathbb{HR}$  & $-1$ & $+1$ & $\exp(\scriptstyle{+3\cdot  \frac{2\pi i}{8}})$  \\
\hline
$\IH$ & $\mathbb{HC}$  & $-1$ & $0$ & $\exp(\scriptstyle{+4\cdot  \frac{2\pi i}{8}})$   \\
\hline
$\cl_{-3}$ & $\mathbb{HH}$  & $-1$ & $-1$ & $\exp(\scriptstyle{-3\cdot  \frac{2\pi i}{8}})$  \\
\hline
\end{tabular}
\end{adjustbox}
\caption{The Dyson Types having real central graded commutants matched with their Frobenius-Schur indicators.}
\label{tableFS}
\end{center}
\end{table}

A similar indicator was introduced in \cite{Gow1979},
and rediscovered recently in a study of crossed product group superalgebras
by Ichikawa and Tachikawa in  \cite{ichikawa2020super}.

\section{How To Construct Examples of Different Dyson Types of \texorpdfstring{$\phi$}{phi}-irreps From Unitary Representations}
\label{HowToConstruct}
Here, we discuss one particular way of obtaining $\phi$-irreps of a given type from a complex representation.

The simplest example comes from a graded group being a direct product $\IZ_{2}\times G^{0}$. Let $(\rho^{\IC},V)$ be a complex unitary irrep of the group $G^{0}$. In the following five cases, we map the non-trivial element of $\IZ_2$ to an element of $\CX$ of the given unitary irrep, or that of its double-copy. We begin with the type $\IR\IR$. If $\rho^{\IC}$ is a potentially real irrep, then the commutant $\CX$ contains an antilinear operator squaring to $+\unit$. We simply include this operator to our irrep and obtain a $\phi$-irrep of the type $\IR\IR$. In order to obtain the type $\IH\IR$ from a potentially real irrep, we have to take its double-copy $(\rho^{\IC}\ot \unit_2, V\ot \IR^2)$. This way we enlarge the commutant and it now includes an antiunitary operator squaring to $-\unit$, which we use in order to get a $\phi$-irrep of the type $\IH\IR$.

Assume $\rho^{\IC}$ is a potentially quaternionic irrep. Then, we can add the antilinear element of $\CX$ squaring to $-\unit$ and obtain a $\phi$-irrep of the type $\IR\IH$. Analogously, we can take a double-copy, such that the commutant contains an element squaring to $+\unit$. Adjoining this gives us the type $\IH\IH$.

Finally, the commutant of a ``complexification'' of a truly complex irrep always contains an antilinear operator exchanging the two irreducible components, that allows us to obtain a $\phi$-irrep of the type $\IC\IC_2$. One can check that a complex structure still belongs to the commutant $\CZ$.

There is a universal way of constructing $\phi$-irreps from a complex-linear irrep $(\rho^{\IC},V)$ of a group $H$. We define $G^{0} \equiv H \times H$ and its representation $(\rho^{\IC}\oplus \rho^{\IC}, V\oplus V)$. There is an automorphism of $G^{0}$ permuting the factors of $H$. Let us denote the non-trivial generator of $\IZ_{2}$ by $p$, then
\begin{align}
    \IZ_{2}&\to Aut(G^{0})\\
    p: (h_1,h_2) &\mapsto (h_2,h_1)\;, \;\; \mbox{where}\;h_1,h_2\in H\;,
\end{align}
what allows us to define a semidirect product $G\equiv \IZ_2 \ltimes G^{0}$. The group $G$ contains pairs $[t,(h_1,h_2)]$ with $t \in \IZ_2$ and is clearly $\IZ_2$-graded. We define a $\phi$-irrep of $G$ by the following rule:
\begin{equation}
    \rho([p,(h_1,h_2)]=\rho^{\IC}(h_2)J\oplus \rho^{\IC}(h_1)J\;,
\end{equation}
where $J$ is a complex-conjugation operator acting on $V$. If $I$ is the complex structure for $V$, then the commutant $\CZ$ contains precisely one non-trivial operator $I\oplus (-I)$. On the other hand, the type of $\rho^{\IC}$ does not matter for this construction and we obtain any of the types $\IC\IR$, $\IC\IH$, or $\IC\IC_1$.

There are two remaining cases where $G$ should be taken as the general semidirect product group. Given a truly-complex irreducible representation of $G^{0}$, we can map the non-trivial generator of $\IZ_2$ to a real or quaternionic structure and obtain a $\phi$-irrep of the type $\IR\IC$ or $\IH\IC$ correspondingly.

\subsection{A Special Case: Groups Containing ``Time-Reversal''}
\label{Sec6}
In this section, we review Dyson's treatment of $\phi$-irreps of ``factorizable'' groups. It refers to groups of symmetries containing a distinguished time-reversal operation commuting with all unitary transformations. We note that our discussion of antiunitary symmetries is not limited to operations of time-reversal, and time-reversal symmetries are not limited to those commuting with unitary symmetries.

Until this point we made an assumption that the group $G$ is non-trivially graded. We assumed that map \ref{gradingG} is surjective, or equivalently, the following sequence is exact  \begin{equation*}
     \begin{tikzcd}
1 \arrow[r]
& G^{0} \arrow[r]  & G \arrow[r, "\phi"]  & \mathbb{Z}_2 \arrow[r]  & 1
\end{tikzcd}
 \end{equation*}
Let us consider a special case when $G$ is a semidirect products of $G^{0}$ with $\IZ_2$, so that the sequence is split. In other words, there exists a group homomorphism
\begin{equation}
\label{splittinghom}
   s: \IZ_2 \to G\;
\end{equation}
such that $\phi \circ s=id_{\IZ_2}$. Then, $G$ is a split extension; they are known to be classified by group homomorphisms
\begin{equation}
\label{semidirect}
   \psi: \IZ_2 \to \mbox{Aut}(G^0)\;,
\end{equation}
allowing as to define the semidirect product $G\cong  G^0\rtimes_{\psi} \IZ_2$.

\begin{Remark}
The first example of a group admitting $\IZ_2$-grading and not being a semidirect product is the quaternionic group $\CQ_8$. The normal subgroup $G^{0}$ here is generated by $\{\pm 1, \pm i\}\cong \IZ_4$. Accordingly, the only order-2 subgroup of $\CQ_8$ is the one generated by $\{\pm 1\}$. Those two subgroups intersect non-trivially and cannot form a semidirect product. Then, $\CQ_8$ is not isomorphic to $\IZ_4 \rtimes \IZ_2$.
\end{Remark}

Furthermore, we make an assumption that every automorphism of $G^{0}$ is inner, or equivalently, the group of outer automorphisms is trivial. Then, there always exists $\tilde {g}\in G^{0}$ such that
\begin{equation}
\label{psi1}
    \psi(P)(g)=\tilde {g}^{-1} g \tilde {g}\;,\quad \mbox{for} \;g\in G^{0}\;.
\end{equation}
and $P$ is the non-trivial element of $\IZ_2$. On the other hand, we can use the splitting homomorphisms in order to construct the map \eqref{semidirect}
\begin{equation}
\label{psi2}
    \psi(P)(g)=s(P)^{-1}g\;s (P)\;.
\end{equation}
 For any semidirect product $G\cong G^0\rtimes_{\psi} \IZ_2$, with a group $G^0$ having only inner automorphisms, specified by the homomorphism $s$ and $\tilde{g}$, we can define an operator $T=s(P)\tilde {g}^{-1} \in G^{1}$. Matching \ref{psi1} and \ref{psi2}, we see that $T$ commutes with $G^{0}$.

\begin{Remark}
The smallest example of a group which is a semidirect product and not every automorphism is inner is the dihedral group of 8 elements $D_{4}$. Here, $G^{0}=\IZ_4$ and it has one non-trivial outer automorphism, the one used to build $D_{4}$. Seen geometrically, this is the group of symmetries of the square. If we represent the elements involving reflections antiunitary, there will be no distinguished ``time-reversal''.
\end{Remark}
Let us note that such $T$ is well-defined in the case when both $G^{0}$ and $\IZ_2$ are normal subgroups of $G$. In this case, $G\cong G^{0}\times \IZ_2$ and $T$ is simply the non-trivial generator of $\IZ_2$. Then, we impose no restrictions on $G^{0}$. Another situation, if $G^{0}$ has trivial group of outer automorphisms and trivial center ($G^{0}$ is complete), $G$ is automatically a direct product of $G^{0}$ and $\IZ_2$.

Let us call groups satisfying any of two properties above factorizable. For a factorizable symmetry group, there always exists an operator $T$ commuting with all unitary symmetries. This operator is canonically associated with the operation of time-reversal in physics. If the symmetry group $G$ is a semidirect product and not factorizable, one may voluntarily call any antiunitary symmetry as the time-reversal, but there is no canonical choice.

Let us consider irreducible $\phi$-representation of a factorizable group. It immediately follows for such $\phi$-irrep that $\CX$ is not a one-dimensional algebra. Indeed, the element $\rho(T)$ commutes with all $\CA$. However, this element does not belong to $\CY$ as it is $\IC$-antilinear. So, a $\phi$-irrep of a factorizable group cannot have a purely even graded commutant.

We proceed further turning to group superalgebras. Here, $\rho(T)$ can be associated with the element $u$ such that $\CA_{s}^{1}=u \CA^{0}$ and $u\in \CX$. Matching with Table \ref{table2} and Table \ref{table3}, we imply that $\phi$-irreps of factorizable groups are always of the following Dyson Types: $\IR\IR$, $\IR\IH$, $\IC\IC_2$, $\IH\IR$, $\IH\IH$. In conclusion, the group superalgebra of graded factorizable groups must be similar to one of the following: $\cl_{\pm 1}, \cl_{\pm 3}, \cl_{1,-1}, \cl_4$.

Finally, we can discuss the sign of $\rho(T)^2$ when $A_{s}^{0}$ is a central simple algebra over $\IR$. According to that, $\rho(T)^2=+\unit$ for the Dyson Types $\IR\IR$ and $\IH\IH$, and $\rho(T)^2=-\unit$ for the types $\IR\IH$ and $\IH\IR$. This way, the sign of $\rho(T)^2$ is fixed by the full Dyson Type of the $\phi$-irrep and not solely by the type of $\CD$ or $\CA$.
\appendix

\section{General Facts About Superalgebras}
Here, we give basic definitions concerning superalgebras. Taking even components everywhere, would give us definitions concerning ordinary algebras. Vectors spaces below are defined over a field $\kappa$, for our needs it will stand for $\IR$ or $\IC$. Throughout the paper, all superalgebras are assumed to be unital and associative.
\label{B}
\begin{definition}
a) A $\kappa$-superalgebra $A$ is a vector superspace over $\kappa$ together with a product $a\otimes a'\to aa'$. For homogeneous elements of $A$,
\begin{equation}
     \mbox{deg}(aa')=\mbox{deg}(a)+\mbox{deg}(a')\;.
\end{equation}

b) Two homogeneous elements of a superalgebra are said to graded-commute, or super-commute provided
\[aa'=(-1)^{\scriptstyle{deg}(a)\scriptstyle{deg}(a')}a'a\]
If every pair of elements $a$ and $a'$ in $A$  graded-commute, then the superalgebra $A$ is called  supercommutative.

\begin{definition}
Algebra $A^{\scriptstyle{opp}}$ is said to be opposite to $A$ if it consist of the same elements, but multiplication is performed in opposite order.
\end{definition}

\begin{definition}
The center $\CZ(A)$ of a superalgebra $A$ is the maximal commutative ungraded subalgebra of $A$:
\begin{equation}
    z\in \CZ(A) \Leftrightarrow z a =a z \quad \mbox{for any $a\in A$}\;.
\end{equation}

The supercenter $\CZ_{s}(A)$ of a superalgebra $A$ is the maximal supercommutative graded subalgebra of $A$. Homogeneous elements of the supercenter $z^{0}, z^{1} \in A$ satisfy:
\begin{align}
    z^{0}\in \CZ_{s}^{0}(A) \Leftrightarrow z^{0}a^{0}=a^{0}z^{0}\;,\quad \& \quad  z^{0}a^{1}=+a^{1}z^{0}\;,\\
    z^{1}\in \CZ_{s}^{1}(A) \Leftrightarrow z^{1}a^{0}=a^{0}z^{1}\;,\quad \& \quad  z^{1}a^{1}=-a^{1}z^{1}\;.
\end{align}

A superalgebra is called supercentral if $\CZ_{s}(A)$ coincides with $\kappa$.
\end{definition}
\end{definition}
\begin{definition}
Let $A$ and $B$ be superalgebras. We define the graded tensor product $A\hat \ot B$ as the superalgebra, which is the graded tensor product as a vector space and the multiplication of homogeneous elements satisfy
\begin{equation}
   ( a_1\hat \ot b_1)(a_2 \hat\ot b_2)=(-1)^{\scriptstyle{deg}(a_2)\scriptstyle{deg}(b_1)}(a_1a_2)\hat\ot (b_1b_2)\;.
\end{equation}
\end{definition}
\begin{definition}
An ideal $I$ of superalgebra $A$ is called homogenious if it has the form $I=I^0\oplus I^1$ such that $I^0\subset  A^0$ and $I^1\subset A^1$. A superalgebra $A$ is called simple if its only homogeneous two-sided ideals are ${0}$ and $A$ itself.
\end{definition}
Simple superalgebras are objects of our main interest. In their turn, Clifford algebras are main examples of simple superalgebras.

\section{Clifford Algebras}
\label{Clifford algebras}
\begin{definition}
Let $V$ be a vector space over a field $\kappa$ and let $B$ be a non-degenerate symmetric bilinear form on $V$ valued in $\kappa$. The Clifford algebra $\cl(V_{\kappa},B)$ is the algebra over field $\kappa$ generated by the basis ${e_i}$ for $V_{\kappa}$ and defined by relations
\begin{equation}
\label{ClifRels}
    e_ie_j+e_je_i=2 B(e_i,e_j)\;.
\end{equation}
\end{definition}
 Even subalgebra of $\cl(V_{\kappa},B)$ consist of products of even numbers of generators, including the base field, while odd part consists of products of odd number of generators. The grading is consistent with multiplication because the Clifford structure is given by the quadratic relations. For Clifford algebras, it is easy to find the opposites using the sign rule (remember that the multiplication is a bilinear map $A\hat \ot_{\kappa} A\to A$).
\begin{equation}
\cl(V_{\kappa}, B)^{\scriptstyle{opp}}\cong \cl (V_{\kappa},-B)\;.
\end{equation}
The most interesting examples for us are real and complex Clifford algebras. Assume we fixed the basis of $V_{\kappa}$ such that $B$ is diagonal. Then, over the reals this form has the signature $(r,s)$, where $r$ is the number of positive elements on the diagonal and $s$ is the number of negative ones. We denote the real Clifford algebra corresponding to the bilinear form of signature $(r,s)$ by $\cl_{r,-s}$. If the form is positive-definite (negative-definite), then we use the following one-index notation:  $\cl_{r,0}\equiv \cl_{r}$ and $\cl_{0,-s}\equiv \cl_{-s}$.

Without loss of generality, we can diagonalize the form $B$ over the complex numbers in such a way that it consists of $+1$'s on the diagonal. We denote the complex Clifford algebra corresponding to a rank-$n$ bilinear form by $\IC\ell _n$.

As follows from the Clifford relations \eqref{ClifRels}, the graded tensor product of two Clifford algebras is a Clifford algebra:
\begin{equation}
    \cl(V_{\kappa}', B')\hat \ot_{\kappa} \cl(V_{\kappa}'', B'')\cong \cl (V_{\kappa}'\oplus V_{\kappa}'',B'\oplus B'')\;.
\end{equation}

\section{(Graded)-Brauer Groups}
\label{Brauer}

 A theorem by Wedderburn provide us with a structure of simple (super)-algebras. In order to formulate that theorem, we need the notion of a division superalgebra.
\begin{definition}
A division superalgebra $A$ is a superalgebra such that any non-zero homogeneous element is invertible.
\end{definition}
\begin{theorem}(super-Wedderburn)
\label{super-Wed}
Any simple superalgebra over the field $\kappa$ is a matrix superalgebra with coefficients in a division superalgebra over the field $\kappa$.
\end{theorem}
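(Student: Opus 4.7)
My plan is to mirror the classical Wedderburn structure theorem in the $\IZ_2$-graded setting, with minimal graded left ideals playing the role of minimal left ideals and the super-Schur argument already used above taking the place of Schur's lemma. Throughout I assume $A$ is finite-dimensional over $\kappa$, which is the implicit context of the paper. First, I would pick a minimal graded left ideal $L \subset A$; such an $L$ exists by the descending chain condition on homogeneous left ideals. By minimality, $L$ is a simple graded left $A$-module. I would then set $D := \End_A(L)$, the superalgebra of $A$-linear endomorphisms of $L$, graded by whether they preserve or exchange the two homogeneous components of $L$. Exactly the argument used in the super-Schur lemma for $\phi$-irreps applies: for a nonzero homogeneous $f \in D$, both $\ker f$ and $\mathrm{Im}\, f$ are graded $A$-submodules of $L$, so simplicity of $L$ forces $f$ to be an isomorphism. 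Hence every nonzero homogeneous element of $D$ is invertible, and $D$ is a division superalgebra over $\kappa$.

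Next I would view $L$ as a graded right $D$-module and introduce the grade-preserving homomorphism $\varphi: A \to \End_D(L)$ sending $a$ to left multiplication by $a$. Its kernel is a homogeneous two-sided ideal of $A$ not containing $\unit$, so the simplicity of $A$ forces $\ker \varphi = 0$ and $\varphi$ is injective. The main obstacle of the proof is the surjectivity of $\varphi$, which is the graded analog of the Jacobson density theorem. In the finite-dimensional setting I would prove it either via a graded double centralizer argument, or more concretely by first decomposing $A$ as a finite direct sum of copies of $L$ (using that $A$, viewed as a graded left $A$-module, is semisimple because simplicity as a superalgebra forces it to be a sum of minimal graded left ideals, each isomorphic to $L$) and then comparing $\kappa$-dimensions on both sides to see that $\varphi$ must hit everything.

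To finish, I would exploit that any finitely generated graded right module over a division superalgebra is free and admits a homogeneous basis; this is a direct graded version of the usual Gaussian-elimination argument over a division ring. Picking such a basis of $L$ with, say, $n$ even and $m$ odd generators identifies $\End_D(L)$ with the matrix superalgebra of $(n+m) \times (n+m)$ matrices over $D$, graded by the block decomposition coming from the basis. Composing this identification with the isomorphism $\varphi$ from the previous paragraph presents $A$ as a matrix superalgebra with coefficients in the division superalgebra $D$, which is precisely the content of super-Wedderburn.
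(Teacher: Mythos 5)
The paper does not actually prove this theorem; it simply cites Varadarajan's book as a reference. Your proposal is the standard Wedderburn--Artin argument transplanted to the $\IZ_2$-graded setting, which is essentially what one finds in that reference, and the overall architecture is sound: a minimal graded left ideal $L$, super-Schur giving the division superalgebra $D=\End_A(L)$, injectivity of $\varphi:A\to\End_D(L)$ from simplicity of $A$, surjectivity by a density or double-centralizer argument, and a homogeneous-basis argument to identify $\End_D(L)$ with a matrix superalgebra.

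One caution worth flagging. When you decompose $A$ as a direct sum of minimal graded left ideals you assert that these are ``each isomorphic to $L$''; in the graded setting this is not automatic, since a minimal graded left ideal may only be isomorphic to the parity shift $\Pi L$. The decomposition is really $A\cong L^{\oplus a}\oplus(\Pi L)^{\oplus b}$, and both $a$ and $b$ are nonzero precisely when $D$ is purely even (the basic example being $A=\End_\kappa(\kappa^{1|1})$, $D=\kappa$). This is harmless because $\Hom_A(L,\Pi L)\cong\Pi D$, but it is exactly what produces the $a|b$ block grading on the target, so it should be stated. In fact, reading off $A^{\mathrm{opp}}\cong\End_A(A)\cong\Mat_{a|b}(D)$ from this decomposition gives the theorem directly (with $D^{\mathrm{opp}}$ as the coefficient division superalgebra) and bypasses the surjectivity of $\varphi$ entirely. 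If you keep the $\varphi$ route, note that a naked $\kappa$-dimension count is not quite sufficient: you also need that the multiplicity of $L$ in $A$ equals the $D$-rank of $L$, and that equality is itself most cleanly extracted from $A^{\mathrm{opp}}\cong\Mat_{a|b}(D)$, so the double-centralizer variant you mention as an alternative is the more robust choice.
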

A very good source to look up the proof is the book by V.S. Varadarajan \cite{varadarajansupersymmetry}. For our applications, we invoke the theorems by Frobenius and Wall, listing all real division algebras and superalgebras respectively.
\begin{theorem}{(Frobenius)}
\label{FrobTh}
There are three associative real division algebras: $\mathbb{R}, \mathbb{C}$, and $ \mathbb{H}$.
\end{theorem}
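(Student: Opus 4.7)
The plan is to classify all finite-dimensional associative real division algebras $D$ by showing that every element is at most quadratic over $\mathbb{R}$, splitting $D$ into a ``real part'' and a ``pure imaginary part'' $V$, and then bounding $\dim_{\mathbb{R}} V$ using a naturally defined positive-definite quadratic form on $V$.

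First I would identify $\mathbb{R}$ with the scalar subalgebra of $D$ (so in particular $\mathbb{R}$ is central). For any $x \in D$, the subalgebra $\mathbb{R}[x]$ is a commutative, finite-dimensional, associative real algebra with no zero divisors, hence a finite field extension of $\mathbb{R}$. By the fundamental theorem of algebra it is isomorphic to $\mathbb{R}$ or to $\mathbb{C}$, so every $x \in D$ satisfies a monic polynomial over $\mathbb{R}$ of degree at most two. Next, define $V := \{x \in D : x^2 \in \mathbb{R}_{\le 0}\}$. If $x \notin \mathbb{R}$ satisfies $x^2 = ax + b$, completing the square yields $(x - a/2)^2 = b + a^2/4$, and this scalar must be strictly negative because $x \notin \mathbb{R}$; hence $x - a/2 \in V$ and $D = \mathbb{R} \oplus V$ as vector spaces.

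Second, I would equip $V$ with the form $B(x,y) := -\tfrac{1}{2}(xy + yx)$. The key verification is that $B$ takes values in $\mathbb{R}$: for $x, y \in V$, the element $xy + yx$ commutes with both $x$ and $y$ because $x^2, y^2 \in \mathbb{R}$, and writing $(x+y)^2$ in the decomposition $\mathbb{R} \oplus V$ forces its ``real part'' $xy + yx + x^2 + y^2$ to lie in $\mathbb{R}$. So $B$ is a well-defined symmetric $\mathbb{R}$-bilinear form, positive-definite because $B(x,x) = -x^2 \ge 0$ with equality only when $x = 0$ (division algebra). In particular $V$ is closed under addition, so it is an honest real subspace. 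If $\dim V = 0$, then $D = \mathbb{R}$. If $\dim V = 1$, any unit vector $i \in V$ satisfies $i^2 = -1$, whence $D = \mathbb{R}[i] \cong \mathbb{C}$. If $\dim V \ge 2$, pick $B$-orthonormal $i, j \in V$; the relations $i^2 = j^2 = -1$ and $ij + ji = 0$ are automatic, and setting $k := ij$ gives $k^2 = -1$ together with the full set of quaternion relations, producing an embedding $\mathbb{H} \hookrightarrow D$.

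The main obstacle is excluding $\dim V \ge 3$. Suppose for contradiction there exists $\ell \in V$ that is $B$-orthogonal to each of $i, j, k$, so that $\ell$ anticommutes with each of them. Then on one hand $\ell k = -k\ell$, but on the other hand
\begin{equation}
\ell k \;=\; \ell(ij) \;=\; -i(\ell j) \;=\; i(j\ell) \;=\; (ij)\ell \;=\; k\ell,
\end{equation}
forcing $k\ell = 0$, which contradicts the division property. Hence $\dim V \le 3$, and when $\dim V = 3$ the subalgebra generated by $V$ is already $4$-dimensional, so $D = \mathbb{H}$. The delicate points are purely Step 2 — verifying that $xy + yx \in \mathbb{R}$ for $x, y \in V$ (which requires using the quadratic-polynomial structure on $x+y$ and not just on $x,y$ individually) — and the sign-chase in Step 3; the rest reduces to standard manipulations.
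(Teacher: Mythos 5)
The paper does not actually prove Frobenius's theorem; it simply states it and moves on, so there is nothing in the text to compare your argument against. What you give is the standard Palais-style proof (decompose $D = \mathbb{R} \oplus V$, endow $V$ with a positive-definite form, bound $\dim V$), which is a perfectly good choice.

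There is, however, a genuine gap in Step 2, the one step you correctly flag as delicate. You write that ``writing $(x+y)^2$ in the decomposition $\mathbb{R}\oplus V$ forces its real part $xy+yx+x^2+y^2$ to lie in $\mathbb{R}$,'' but $xy+yx+x^2+y^2$ equals $(x+y)^2$ in its entirety, not its real part, so this sentence asserts what is to be proved; worse, at this stage $V$ is not yet known to be a subspace, so $D = \mathbb{R}\oplus V$ is only a decomposition of the underlying set, and one cannot read off the $\mathbb{R}$- and $V$-components of $x+y$ from those of $x$ and $y$. Your commutation observation is true but does nothing as written. To close the gap: if $x,y\in V$ are $\mathbb{R}$-linearly dependent the claim is immediate; otherwise first check that $1,x,y$ are $\mathbb{R}$-linearly independent (from $y=a+bx$ one gets $y^2=(a^2+b^2x^2)+2abx$, so $y^2\in\mathbb{R}$ forces $ab=0$ and then $a=0$, contradicting independence). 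Then both $x+y$ and $x-y$ satisfy real monic quadratics $(x\pm y)^2 = p_{\pm}(x\pm y)+q_{\pm}$; adding these gives $2(x^2+y^2)=(p_{+}+p_{-})x+(p_{+}-p_{-})y+(q_{+}+q_{-})$, and linear independence of $1,x,y$ forces $p_{+}=p_{-}=0$, so $(x+y)^2\in\mathbb{R}$ and hence $xy+yx\in\mathbb{R}$. (Alternatively, your commutation fact does the job if carried through: $z:=xy+yx$ commutes with $x$, so $\mathbb{R}[x,z]$ is a field and therefore equals $\mathbb{R}[x]$, giving $z=a+bx$; symmetrically $z=c+dy$; linear independence of $1,x,y$ forces $b=d=0$.) Also note the off-by-one slip: ``excluding $\dim V\ge 3$'' should read ``excluding $\dim V\ge 4$,'' consistent with the conclusion $\dim V\le 3$ that you correctly draw afterward.
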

\begin{theorem}{(Wall, Deligne)}
\label{Wall}
There are ten associative real division superalgebras: three purely even superalgebras $\mathbb{R}, \mathbb{C}$, and $ \mathbb{H}$ and seven Clifford superalgebras $\mathbb{C}\ell_{1}, C\ell_{\pm1}, C\ell_{\pm2}, C\ell_{\pm3}$.
\end{theorem}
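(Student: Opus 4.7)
The plan is to classify real associative unital division superalgebras $D = D^{0} \oplus D^{1}$ by stratifying first on the isomorphism type of $D^{0}$, and then on the conjugation action of an odd invertible element on $D^{0}$, essentially along the lines of Deligne's treatment.

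First I would observe that $D^{0}$ is itself an ungraded real division algebra: for $x \in D^{0}$ nonzero, any inverse $y \in D$ must be purely even, because comparing homogeneous parts of $xy = 1$ forces $xy^{1} = 0$ and hence $y^{1} = 0$. Frobenius (Theorem \ref{FrobTh}) then restricts $D^{0} \in \{\IR, \IC, \IH\}$, and the subcase $D^{1} = 0$ yields the three purely even superalgebras. Otherwise I would pick any nonzero $u \in D^{1}$; it is invertible, left multiplication by $u$ gives an $\IR$-isomorphism $D^{0} \to D^{1}$, and conjugation $\alpha_{u}(a) = u a u^{-1}$ is an $\IR$-algebra automorphism of $D^{0}$. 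The algebra $D = D^{0} \oplus u D^{0}$ is then determined by $\alpha_{u}$ together with the element $u^{2} \in D^{0}$, modulo the remaining freedom to rescale $u$.

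I would then split on $D^{0}$. For $D^{0} = \IR$, $\alpha_{u}$ is trivial, $u^{2} \in \IR^{\times}$, and $u \mapsto \lambda u$ gives $(\lambda u)^{2} = \lambda^{2} u^{2}$, normalizing $u^{2} = \pm 1$ and producing $\cl_{\pm 1}$. For $D^{0} = \IC$, $\mathrm{Aut}_{\IR}(\IC) = \{\mathrm{id}, \mbox{conj}\}$. In the identity subcase $u$ is $\IC$-central and $(\lambda u)^{2} = \lambda^{2} u^{2}$; since $\IC$ is algebraically closed I can normalize $u^{2} = 1$, producing $\IC\ell_{1}$. In the conjugation subcase $ui = -iu$, and $u^{2}$ is fixed by $\alpha_{u}$ hence real; now $(\lambda u)^{2} = |\lambda|^{2} u^{2}$ for $\lambda \in \IC^{\times}$ leaves the sign of $u^{2}$ invariant, producing $\cl_{\pm 2}$. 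For $D^{0} = \IH$, Skolem-Noether makes every $\IR$-automorphism inner, so $\alpha_{u} = \alpha_{h}$ for some $h \in \IH^{\times}$; replacing $u$ by the still-odd element $h^{-1} u$ I can assume $u$ commutes with $\IH$, whence $u^{2} \in Z(\IH) = \IR^{\times}$, and real rescaling normalizes $u^{2} = \pm 1$, producing $\cl_{\pm 3}$. Summing, $3 + 2 + (1 + 2) + 2 = 10$ candidates.

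It remains to verify each candidate really is a division superalgebra and to match it to the asserted Clifford algebra; both are routine. Any nonzero odd element $u a$ has the explicit inverse $a^{-1} u / u^{2}$ once $u^{2} \in \{\pm 1\}$ and $a \in (D^{0})^{\times}$, and matching to Clifford algebras proceeds by identifying standard generators: for instance, in $\cl_{-3}$ the volume element $\omega = e_{1} e_{2} e_{3}$ is odd, central, with $\omega^{2} = +1$ and $\cl_{-3}^{0} \cong \IH$, matching $D^{0} = \IH$, $u^{2} = +1$. The main obstacle lies in the $D^{0} = \IH$ case: one must check that the Skolem-Noether reduction $u \mapsto h^{-1} u$ preserves parity, and that among rescalings $u \mapsto q u$ only those with $q \in \IR^{\times}$ preserve $\IH$-centrality, so that the sign of $u^{2}$ is a genuine invariant rather than a scaling artifact. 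Once this is settled, pairwise non-isomorphism of the ten algebras follows from real dimension, whether $D^{0}$ is commutative, the action of $\alpha_{u}$ on $D^{0}$, and the sign of $u^{2}$ on a normalized odd generator.
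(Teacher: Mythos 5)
Your proposal is correct, but it takes a genuinely different route from the paper. The paper proves the theorem by computing the graded Brauer group $sBr(\IR)$ as an iterated extension: first it establishes the complexification map $sBr(\IR)\to sBr(\IC)\cong\IZ_2$, then analyzes the kernel via the Galois action of $Gal(\IC/\IR)$ on the two simple graded modules of a complexified central simple superalgebra, and finally identifies the inner kernel with $Br(\IR)\cong\IZ_2$; counting gives $|sBr(\IR)|=8$, and together with the two complex division superalgebras this yields ten. Your proof is instead a direct case analysis: stratify on $D^0\in\{\IR,\IC,\IH\}$ via Frobenius, then on the conjugation automorphism $\alpha_u$ of $D^0$ by an odd generator $u$ (trivial for $\IR$, $\{\mathrm{id},\mathrm{conj}\}$ for $\IC$, inner hence removable for $\IH$ by Skolem--Noether), then on the scaling-invariant sign of $u^2$. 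This is the ``by construction'' proof the paper attributes to Trimble \cite{Trimble2005} and explicitly sets aside in favor of Deligne's Brauer-theoretic route. Your approach is more elementary and self-contained, and directly exhibits each algebra and verifies its division property; the paper's approach yields the group structure of $sBr(\IR)\cong\IZ_8$ as a byproduct, which is actively used elsewhere in the paper (e.g.~in equation~\eqref{eq.inBR} and the Frobenius--Schur discussion), so the extra machinery is not wasted there. Your handling of the two genuine subtleties---that the Skolem--Noether replacement $u\mapsto h^{-1}u$ stays odd because $h\in D^0$, and that only real rescalings preserve $\IH$-centrality so the sign of $u^2$ is a true invariant---is correct, as is the complex case where $(\lambda u)^2=|\lambda|^2u^2$ under the conjugation automorphism while $(\lambda u)^2=\lambda^2 u^2$ under the trivial one, which is exactly what collapses $\IC\ell_1$ to a single class but splits $\cl_{\pm2}$.
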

We sketch a proof of the Deligne-Wall theorem below. A consequence of those theorems is that any real simple algebra has the form $\mbox{End}_{\IR}(V_{\IR})\ot_{\IR} \mathbb{K}$, where $\mathbb{K}$ is one of the three algebras: $\IR$, $\IC$, and $\IH$. Any real simple superalgebra is isomorphic to $\mbox{End}_{\IR}(V_{\IR}^{m|n})\hat \ot_{\IR} \mathbb{D}$, where $\mathbb{D}$ is one of the ten real division superalgebras above.

The following relation is an important consequence of the Wedderburn theorem (see e.g. \cite{varadarajansupersymmetry}):
\begin{equation}
\label{Wedderburn2}
\mbox{End}_{\IR}(V_{\IR}^{m|n})\;\hat \otimes\;\mathbb{D}\; \hat \otimes \mbox{End}_{\IR}(V_{\IR}^{p|q})\;\hat \otimes\;\mathbb{D}^{\scriptstyle{opp}}\cong \mbox{End}_{\IR}(V_{\IR}^{mp+nq|mq+np})\hat \otimes \End_{\IR}(\mathbb{D})\;,
\end{equation}
where $\End_{\IR}(\mathbb{D})$ is the endomorphism superalgebra of $\mathbb{D}$ considered as a vector superspace.

The (sueper)-Wedderburn theorem provides us with the general structure of simple (super)-algebras over a field $\kappa$. It is suggestive to introduce an equivalence relation on the superalgebras, such that $A$ and $B$ are similar iff they are isomorphic to matrix superalgebras with coefficients in the same division $\kappa$-superalgebra $\mathbb{D}$:
\begin{equation}
    A\sim B \iff A\cong \End_{\kappa}(V_{\kappa}^{m|n})\hat \ot_{\kappa}\mathbb{D} \quad \& \quad B\cong\End_{\kappa}(V_{\kappa}^{p|q})\hat \ot_{\kappa}\mathbb{D}\;.
\end{equation}
We denote the image of $A$ in the set of similarity classes by $[A]$.

\begin{Remark}
Throughout the paper, we interchangeably use the terms similarity classes and Morita-classes. There is a general notion of Morita equivalence of algebras, such that two (graded) algebras are Morita-equivalents iff their categories of (graded) modules are equivalent. It turns out that the equivalence relation we introduced is identical to the Morita-equivalence relation.
\end{Remark}

It is clear from the definition that the set of similarity classes of simple (super)-algebras is in one-to-one correspondence with the set of division (super)-algebras over $\kappa$. So, there are 3 classes of simple real algebras and 10 classes of simple real superalgebras. As we will show later, there is a unique class of simple complex algebras and two classes of simple complex superalgebras. It is worth noting that the set of similarity classes of simple algebras over $\kappa$ contain similarity classes of algebras over all algebraic extensions of $\kappa$ up to its algebraic closure.

 A natural operation on simple algebras is the tensor product. Over the real numbers, we can write down the following tensor products
\begin{equation}
    \IR\ot_{\IR} \IR \cong \IR, \quad \IH\ot_{\IR} \IR \cong \IH\;,\quad \IH\ot_{\IR}\IH\cong \End_{\IR}(\IR^{4})\;,
\end{equation}
while $\IC\ot_{\IR}\IC\cong \IC\oplus \IC$. The subtlety here is that $\IC$ is not a central simple algebra over $\IR$. It is an important fact, which we do not prove here, that a (graded) tensor product of two simple (super)-central (super)-algebras is a simple (super)-central (super)-algebras. Moreover, Morita-classes of (super)-central simple (super)-algebras (CSA, SCSSA) form a group. As we mentioned, CSA (SCSSA) are closed under (graded) tensor product, and there always exists an opposite due to the relation \eqref{Wedderburn2}.

Finally, we can list the four relevant (graded) Brauer groups:
\begin{equation}
    Br(\IR)\cong \IZ_2\;,\quad Br(\IC)\cong \IZ_1\;,\quad sBr(\IR)\cong \IZ_8\;,\quad sBr(\IC)\cong \IZ_2\;.
\end{equation}
Further, we notice that a tensor product over $\IR$ of a real CSA with a complex CSA gives us a complex CSA. Then, a pair of the Brauer groups above form a monoid, see the blog by John Baez \cite{Baez2014}. We call it a Brauer monoid $BR(\IR)$: as a set, it is a disjoint union $BR(\IR)\cong Br(\IR) \coprod Br(\IC)$. The Brauer monoid of reals is isomorphic to the multiplicative monoid of three elements:
\begin{equation}
    BR(\IR)\cong \{-1,0,1\}\;,
\end{equation}
where we identify $[\IH]\to -1$, $[\IC]\to 0$, $[\IR]\to +1$. Analogously, we can form a monoid $sBR(\IR)$ of real and complex graded Brauer groups. $sBR(\IR)$ is isomorphic to the additive monoid on ten elements $\{0,1,2,3,4,5,6,7,8, \textbf{0}, \textbf{1}\}$, where we identify
\begin{equation}
    [\cl_{r,-s}]\to r-|s|\; \mbox{mod} \;8\;,
\end{equation}
\begin{equation}
    [\IC\ell_{n}]\to \textbf{n}\; \mbox{mod} \;2\;.
\end{equation}
Addition of lightface numbers is the same as in $\mathbb{Z}_8$, and addition of boldface numbers is the same as in $\mathbb{Z}_2$. In order to add a lightface number $x$ and a boldface number $\textbf{y}$, we should firstly project $x$ into $\mathbb{Z}_2$. For example, in this monoid  $2+\textbf{1}=\textbf{1}$.

\subsection{Wall's theorem}\label{RDS}
In this section, we prove Walls' theorem \cite{Wall1964}, following a more up-to-date treatment by Deligne \cite{Deligne_noteson}.

Let us first notice that the only division algebra $A_{\bar \kappa}$ over an algebraically closed field $\bar \kappa$ is $\bar \kappa$ itself. Any division algebra should be finite-dimensional, so let us consider a minimal polynomial containing powers of $x\in A_{\bar \kappa}$.
\begin{equation*}
    f(x)=x^{n}+ a_1 x^{n-1}+...a_{n}=0\;.
\end{equation*}
Since $\bar \kappa$ is algebraically closed, $f(x)$ has at least one root $b$ and can be factorized $f(x)=(x-b)g(x)=0$. Since $f(x)$ was minimal, $g(x)\neq 0$ and $x=b$, what implies $x\in \bar \kappa$. So, the only division $\bar \kappa$-algebra is $\bar \kappa$ itself.

Now, let us classify all division superalgebras over $\bar \kappa$. We recall that the definition of a division superalgebra requires homogeneous non-zero elements to be invertible. We can apply the same logic as in the previous paragraph to even and odd components of a division $\bar \kappa$-superalgebra and obtain the unique algebra $\bar \kappa [\epsilon]$, where $\epsilon$ is odd and squares to 1. In particular, we obtain all complex division superalgebras: $\IC$ and $\IC \ell_{1}$.

\begin{theorem}(Wall, Deligne)
\label{Wall-Deligne}
The graded Brauer group $sBr(\IR)$ is an iterated extension of $\mathbb{Z}_2$ by $\IR^*/\IR^{*2}$ by the ordinary group $Br(\IR)$.
\end{theorem}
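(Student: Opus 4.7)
The plan is to construct a two-step filtration
\begin{equation*}
1 \subset K_2 \subset K_1 \subset sBr(\IR)
\end{equation*}
whose successive quotients are naturally identified with $\IR^*/\IR^{*2}$, $Br(\IR)$, and $\IZ_2$. This exhibits $sBr(\IR)$ as the claimed iterated extension; specializing to $\IR$, each quotient has order two and the unique non-split tower of three $\IZ_2$'s reproduces $sBr(\IR)\cong\IZ_8$.

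\emph{Step 1 (outer $\IZ_2$).} For any super-central simple real superalgebra $A$, I would examine the ordinary center $Z(A^0)$ of the even subalgebra. Using supercentrality of $A$ together with super-Wedderburn, I would show $Z(A^0)$ is either $\IR$ or a two-dimensional étale extension of $\IR$ (hence $\IC$ or $\IR\oplus\IR$). Assign $\pi_1([A])\in\IZ_2$ by this dichotomy. Well-definedness on Morita classes follows because $\End_\IR(V^{m|n})^0$ has center $\IR$, so the dichotomy is preserved under graded tensoring with matrix superalgebras. That $\pi_1$ is a homomorphism follows from expanding $(A\hat\otimes B)^0=A^0\otimes B^0\oplus A^1\otimes B^1$ and checking the four combinations of the $Z(A^0),Z(B^0)$ types; surjectivity is witnessed by $[\IC\ell_1]$.

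\emph{Step 2 (middle $\IZ_2$).} On $K_1=\ker\pi_1$, the even part $A^0$ is a central simple $\IR$-algebra, so it has a well-defined Brauer class. Define $\pi_2:K_1\to Br(\IR)$ by $[A]\mapsto[A^0]$. The non-trivial point is showing that $\pi_2$ is a homomorphism: for $[A],[B]\in K_1$, I would argue $(A\hat\otimes B)^0\sim A^0\otimes_\IR B^0$ in the Brauer sense. This uses that $A^1$ is an invertible $(A^0,A^0)$-bimodule — the multiplication $A^1\otimes_{A^0}A^1\to A^0$ cannot be zero (else $A^1$ would be an ideal of the simple algebra $A$), so by simplicity of bimodules over the central simple algebra $A^0$ it must be an isomorphism — which implements a Morita equivalence absorbing the summand $A^1\otimes_\IR B^1$. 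Surjectivity is witnessed by $[\IH]$.

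\emph{Step 3 (inner piece).} On $K_2=\ker\pi_2$, we have $A^0\sim\IR$; super-Wedderburn then reduces the class to a minimal representative $A=\IR\oplus\IR\cdot u$ with $u$ an odd invertible element and $u^2\in\IR^*$. Any two odd generators of such an $A$ differ by multiplication by a nonzero even scalar, so $u^2$ is rescaled by a square. This gives a well-defined map $K_2\to\IR^*/\IR^{*2}$, which I would check is a group homomorphism (graded tensor product multiplies the square classes) and a bijection: the nontrivial classes are $[\cl_1]$ (with $u^2=+1$) and $[\cl_{-1}]$ (with $u^2=-1$). Thus $K_2\cong\IR^*/\IR^{*2}$.

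Assembling the two short exact sequences
\begin{equation*}
1\to K_2\to K_1\to Br(\IR)\to 1,\qquad 1\to K_1\to sBr(\IR)\to\IZ_2\to 1,
\end{equation*}
displays $sBr(\IR)$ as an iterated extension of the three groups $\IZ_2$, $\IR^*/\IR^{*2}$, $Br(\IR)$.

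The main obstacle is the homomorphism property of $\pi_2$: one must control how the second summand $A^1\otimes_\IR B^1$ inside the even part of the graded tensor product affects the Brauer class. The cleanest route passes through the Azumaya viewpoint — that an $A^0$-bimodule $M$ satisfying $M\otimes_{A^0}M\cong A^0$ implements a Morita self-equivalence of $A^0$ — which guarantees that, Brauer-equivalently, only the $A^0\otimes B^0$ summand contributes. Once this lemma is in hand, the rest of the proof is bookkeeping with the graded tensor product and super-Wedderburn.
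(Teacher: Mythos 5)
There is a genuine gap, and it starts in Step 1: the outer invariant built from $Z(A^0)$ does not give a homomorphism to $\IZ_2$. The stated well-definedness argument is incorrect: for $m,n>0$ one has $\End_\IR(V^{m|n})^0\cong\End_\IR(V^m)\times\End_\IR(V^n)$, whose center is $\IR\oplus\IR$, not $\IR$. And even on the canonical division-superalgebra representatives the dichotomy fails to cut out a subgroup: $Z(\mathbb{D}^0)\cong\IC$ holds precisely for $\cl_{\pm 2}$ and $Z(\mathbb{D}^0)\cong\IR$ for the other six classes $\IR,\cl_{\pm 1},\cl_{\pm 3},\IH$, a $6$--$2$ split that cannot be the fiber structure of a homomorphism $\IZ_8\to\IZ_2$. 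The proposed surjectivity witness is also off target: $\IC\ell_1$ has supercenter $\IC$ and is therefore not supercentral over $\IR$, so $[\IC\ell_1]\notin sBr(\IR)$.

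The paper's outer quotient is extension of scalars, $sBr(\IR)\to sBr(\IC)\cong\IZ_2$, whose kernel is the order-four subgroup $\{[\IR],[\cl_2],[\IH],[\cl_{-2}]\}$. Your $Z(A^0)$ dichotomy is genuinely relevant but belongs one layer in: restricted to this kernel it does separate $\{[\IR],[\IH]\}$ from $\{[\cl_2],[\cl_{-2}]\}$, and the paper realizes that middle quotient via the Galois action on the two simple modules of $A_\IC$, which is where $\IR^*/\IR^{*2}\cong\Hom(Gal(\IC/\IR),\IZ_2)$ enters. The innermost subgroup of $sBr(\IR)\cong\IZ_8$ is forced (there is only one of order two) to be $\{[\IR],[\IH]\}$, which is $Br(\IR)$, not $\IR^*/\IR^{*2}$; your filtration reverses the two inner layers. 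Note as well that Step 2 is ill-posed on the correct $K_1$: for $[\cl_{\pm 2}]$ one has $A^0\cong\IC$, which is not central simple over $\IR$ and has no class in $Br(\IR)$. The Azumaya-style observation that $A^1\otimes_{A^0}A^1\cong A^0$ as $(A^0,A^0)$-bimodules is correct and a useful tool, but the surrounding filtration needs to be rebuilt following the paper's complexification-then-Galois scheme.
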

More explicitly, the theorem says that $sBr(\IR)$ fits into the exact sequence:
 \begin{equation}
     \begin{tikzcd}
1 \arrow[r]
& sBr(\IR)' \arrow[r,"\phi_1"]  & sBr(\IR) \arrow[r, "\phi_2"] & \mathbb{Z}_2 \arrow[r]  & 1
\end{tikzcd}
 \end{equation}
 while $sBr(\IR)'$ itself fits into the following exact sequence:
\begin{equation}
     \begin{tikzcd}
1 \arrow[r]
& Br(\IR) \arrow[r,"\phi_3"]  & sBr(\IR)' \arrow[r,"\phi_4"] & \IZ_2 \arrow[r]  & 1
\end{tikzcd}
 \end{equation}
\begin{proof}
We first show the existence of the group homomorphism $\phi_2$. This map is just an extension of scalars, it maps all CS superalgebras belonging to a given similarity class to their complexifications. It induces the group homomorphism:  $sBr(\IR)\to sBr(\IC)\cong \mathbb{Z}_2$, which is surjective as $\IC[\epsilon]$ is the image of $\IR[\epsilon]$.

To make the sequence exact, the first term $sBr(k)'$ has to be isomorphic to the kernel of $\phi_2$. These are similarity classes of SCSSA such that, after extension of scalars,
\begin{equation}
    A\to A\ot{\IC}\equiv A_{\IC}
\end{equation}become isomorphic to $\End(\IC^{m|n})$ .

Now, let us analyze $sBr(k)'$. A matrix superalgebra $\End(\IC^{m|n})$ has precisely two simple graded modules: $\IC^{m|n}$ and $\Pi \IC^{m|n}\cong \IC^{n|m}$. Let us denote the set of simple modules over $A_{\IC}$ by $I(A)$. We can notice that simple modules over $A_{\IC}$ can be seen as simple $A$-modules. Also, $I(A)$ admits an action of the Galois group $Gal(\IC/\IR)\cong \IZ_2$. Reversing the complex structure does not necessarily lead to the same module, so that the Galois action might exchange the elements of $I(A)$. Thus, we have a map
\begin{equation*}
    \alpha_{A}: Gal(\IC/\IR)\to \mathbb{Z}_2\;,
\end{equation*}
where $\IZ_2$ is the group of permutations of the two elements of $I(A)$.

 If $[A_1]\in sBr(\IR)'$ and $[A_2]\in sBr(\IR)'$, then $[A_1\hat \ot A_2]\in sBr(\IR)'$.
Also, one can check that $\alpha_{A' \ot A''}=\alpha_{A'}+\alpha_{A''}$, so that the map
\begin{equation}
    A\to \alpha_{A}
\end{equation}
descends to
\begin{equation}
    sBr(\IR)'\to \mbox{Hom}(Gal(\IC /\IR), \mathbb{Z}_2)\cong \IZ_2\;.
\end{equation}
In order to show that this is a surjective homomorphism, we provide an example of a  non-trivial element of $\mbox{Hom}(Gal(\IC /\IR), \mathbb{Z}_2)$. Assume $A=\cl_2$, its complexification $\IC\ell_2\cong \End(\IC^{1|1})$ has the following graded representations (we pick a homogeneous basis):
\begin{equation}
    \rho_{1}^{\IC}(e_1)=
   \begin{pmatrix}
    0&1\\
    1&0\\
     \end{pmatrix}\;, \quad \rho_1^{\IC}(e_2)=
     \begin{pmatrix}
    0&-i\\
    i&0\\
     \end{pmatrix}
\end{equation}
\begin{equation}
    \rho_{2}^{\IC}(e_1)=
   \begin{pmatrix}
    0&1\\
    1&0\\
     \end{pmatrix}\;, \quad \rho_2^{\IC}(e_2)=-
     \begin{pmatrix}
    0&-i\\
    i&0\\
     \end{pmatrix}
\end{equation}
One can check that those two super-reps are non-equivalent. There is a standard argument: there exists an element $e_1e_2\in \CZ(\IC\ell _{2}^{0})$, which acts as a multiplication by $\pm i$ on the even subrepresentations. Reversing the complex structure reverses the sign of $\rho_i(e_1)\rho_i(e_2)$, what means that the Galois action permutes the irreps inside of the super-irrep. However, permutation of even subspaces is not an isomorphism of super-representations, so that complex conjugated representations of $\IC\ell_2$ are not equivalent.

The argument with a center holds for a general $A$ such that $[A]\in sBr(\IR)'$. It is always true that $\CZ(A^{0})$ acts on even subspaces of elements from $I(A)$ by multiplication by characters $(\chi_1,\chi_2)$ (where by a character of the module $M$ we mean a module homomorphism $\mbox{Hom}(M,\IC)$). The Galois action is trivial iff one of the characters is zero or they are both real numbers. The first case corresponds to a purely even $A$, while the second case correspond to real CS superalgebras having center isomorphic to $\IR\oplus \IR$. Indeed, complexification here is a simple multiplication by the complex unit and complex conjugation does not permute the characters.

Let us take any $[A]\in sBr(\IR)'$. The class of $A$ in $sBr(\IR)'$ is that of some real division superalgebra $\mathbb{D}$. If $\CZ(A^{0})\cong \IR \oplus \IR$, then $\CZ( \mathbb{D}^{0})\cong \IR \oplus \IR$, which is impossible as $\mathbb{D}^{0}$ is a real division algebra. So, the kernel of $\phi_4$ contains algebras similar to some real division algebra, which is nothing but the ordinary Brauer group of reals.
\end{proof}
Using theorem \ref{Wall-Deligne}, we derive $|sBr(\IR)|=8$. The simplest representative is $\cl_1$, which is clearly a real division superalgebra. In fact, tensor powers of $\cl_1$ exhaust all similarity classes of central simple superalgebras. The proof of this statement is very clearly presented in \cite{varadarajansupersymmetry}. We also recommend a paper by Todd Trimble \cite{Trimble2005}, where the theorem of Wall is proven by construction.

\end{document}